\newcommand{\norm}[1]{\lVert #1\rVert}
\newcommand{\lrc}[1]{\left \{#1\right\}}
\newcommand{\R}{\mathbb{R}}
\newcommand{\CQ}{\mathcal{Q}}
\newcommand{\T}{\mathcal{T}}
\newcommand{\opt}{{\textsc{opt}}}
\DeclareMathOperator{\peri}{perimeter}
\DeclareMathOperator{\area}{area}
\DeclareMathOperator{\poly}{poly}
\DeclareMathOperator{\diam}{diam}
\newcommand{\pparagraph}[1]{\paragraph*{#1}}
\newtheorem*{rep@theorem}{\rep@title}
\newcommand{\newreptheorem}[2]{
\newenvironment{rep#1}[1]{
 \def\rep@title{#2 \ref{##1}}
 \begin{rep@theorem}}
 {\end{rep@theorem}}}
\newtheorem{theorem}{Theorem}
\newtheorem{lemma}[theorem]{Lemma}
\theoremstyle{definition}
\theoremstyle{remark}
\newtheorem*{remark}{Remark}
\newtheorem*{claim}{Claim}
\crefname{theorem}{Theorem}{Theorems}
\crefname{lemma}{Lemma}{Lemmas}
\crefname{proposition}{Proposition}{Propositions}
\crefname{claim}{Claim}{Claims}
\crefname{corollary}{Corollary}{Corollaries}
\crefname{definition}{Definition}{Definitions}
\crefname{figure}{Figure}{Figures}
\crefname{equation}{Equation}{Equations}
\crefname{section}{Section}{Sections}
\crefname{remark}{Remark}{Remarks}
\title{Partitioning a Polygon Into Small Pieces}
\author{Mikkel Abrahamsen\thanks{Department of Computer Science, University of Copenhagen, \texttt{miab@di.ku.dk}.
Supported by Starting Grant 1054-00032B from the Independent Research Fund Denmark under the Sapere Aude research career programme and is part of Basic Algorithms Research Copenhagen (BARC), supported by the VILLUM Foundation grant 16582.} \and Nichlas Langhoff Rasmussen\thanks{Department of Computer Science, University of Copenhagen, \texttt{nichlas.rasmussen@gmail.com}.}}
\date{October 2024}
\begin{document}

\thispagestyle{empty}

\maketitle

\begin{abstract}
We study the problem of partitioning a given simple polygon $P$ into a minimum number of connected polygonal pieces, each of bounded size.
We describe a general technique for constructing such partitions that works for several notions of `bounded size,' namely that each piece must be contained in an axis-aligned or arbitrarily rotated unit square or a unit disk, or that each piece has bounded perimeter, straight-line diameter or geodesic diameter.
The problems are motivated by practical settings in manufacturing, finite element analysis, collision detection, vehicle routing, shipping and laser capture microdissection.

The version where each piece should be contained in an axis-aligned unit square is already known to be NP-hard~[Abrahamsen and Stade, FOCS, 2024], and the other versions seem no easier.
Our main result is to develop constant-factor approximation algorithms,
which means that the number of pieces in the produced partition is at most a constant factor larger than the cardinality of an optimal partition.
Existing algorithms~[Damian and Pemmaraju, Algorithmica, 2004] do not allow Steiner points, which means that all corners of the produced pieces must also be corners of $P$.
This has the disappointing consequence that a partition often does not exist, whereas our algorithms always produce meaningful partitions.
Furthermore, an optimal partition without Steiner points may require $\Omega(n)$ pieces for polygons with $n$ corners where a partition consisting of just $2$ pieces exists when Steiner points are allowed.
Other existing algorithms~[Arkin, Das, Gao, Goswami, Mitchell, Polishchuk and T{\'{o}}th, ESA, 2020] only allow $P$ to be split along chords (and aim to minimize the number of chords instead of the number of pieces), whereas we make no constraints on the boundaries of the pieces.

In a related problem, we are given a polygon $P$ and positive real values $a_1,\ldots,a_k$ whose sum $\sum_{i=1}^k a_i$ equals the area of $P$.
The goal is to partition $P$ into exactly $k$ pieces $Q_1,\ldots,Q_k$ such that the area of $Q_i$ is $a_i$.
Such a partition always exists, and an algorithm with running time $O(nk)$ has previously been described [Bast and Hert, CCCG, 2000].
We improve on this result and give an algorithm with optimal running time $O(n+k)$ for simple polygons and a running time of $O(n\log n+k)$ for polygons with holes.
\end{abstract}

\newpage

\clearpage
\setcounter{page}{1}

\section{Introduction}
Consider a manufacturing process where we want to produce a large object $P$, but we are only able to produce parts of bounded size.
We then partition $P$ into parts that are sufficiently small to be fabricated, and in the end these can be assembled to form the desired object $P$.
To keep both fabrication and assembly as simple as possible, we naturally want as few parts as possible.
In this paper we provide the first algorithms that can split $P$ into parts of bounded size, where the number of parts is provably close to the minimum possible.

To model the object $P$, we use a simple polygon.
Polygons are among the geometric structures that are most widely used to model physical objects, as they are suitable for representing a wide variety of shapes and figures in computer graphics and vision, pattern recognition, robotics, computer-aided design and manufacturing, and other computational fields.
Our partitioning problems belong to the class of \emph{decomposition problems}, which form an old and large sub-field in Computational Geometry.
In all of these problems, we want to \emph{decompose} a polygon $P$ into connected polygonal \emph{pieces} that must respect a certain restriction.
Here, the union of the pieces should be $P$, and we usually seek a decomposition into as few pieces as possible in which case the decomposition is called \emph{optimal}.
A decomposition where the pieces can overlap is also called a \emph{cover}, and if they are pairwise interior-disjoint, it is called a \emph{partition}.
Depending on the assumptions about the input polygon $P$ and the requirements to the pieces, this leads to a wealth of interesting problems.
There is a vast literature about such decomposition problems, as documented in several highly-cited books and survey papers that give an overview of the state-of-the-art at the time of publication~\cite{shermer1992recent, chazelle1994decomposition, chazelle1985approximation, keil1999polygon, keil1985minimum, o1987art,o2004polygons}.

However, as it turns out, there are no known algorithms with provable guarantees for partitioning a polygon into few small pieces, when no further restrictions are made on the partition.
A \emph{Steiner point} is a corner of a piece in a decomposition that is not a corner of the input polygon~$P$.
We describe algorithms for constructing the following six types of partitions with Steiner points allowed:

\begin{itemize}

\item Aligned square partition: Each piece is contained in an axis-aligned unit square.

\item Rotated square partition: Each piece is contained in an arbitrarily rotated unit square.

\item Disk partition: Each piece is contained in a unit disk, i.e., a disk of radius $1$.

\item Straight diameter partition: The straight-line diameter of each piece is at most $1$.

\item Geodesic diameter partition: The geodesic diameter (i.e., the maximum length of a shortest path) of each piece is at most $1$.

\item Perimeter partition: The perimeter of each piece is at most $1$.
\end{itemize}

The aligned square partitioning problem was recently shown to be NP-hard~\cite{DBLP:journals/corr/abs-2404-09835}; this is the first partitioning problem known to be hard even for simple polygons with the usual objective of minimizing the number of pieces.
The other variants are probably just as hard, but do not lend themselves to an NP-hardness reduction as easily due to their more complicated nature.
In fact, it seems intractable to find optimal solutions even in extremely restricted settings:
if $P$ is a (big) square or an equilateral triangle, it seems out of reach to efficiently compute optimal rotated square, disk, diameter (straight as well as geodesic) and perimeter partitions of $P$.
Researchers have tried to determine the largest squares and equilateral triangles that have disk partitions with $k$ pieces, for various fixed values of $k$, but for squares it is already unknown for $6$ pieces~\cite{heppes1997covering} and for equilateral triangles, it is unknown for $7$~\cite{melissen1997loosest}.\footnote{The papers~\cite{heppes1997covering,melissen1997loosest} ask for the largest square or equilateral triangle that can be \emph{covered} with $k$ unit disks, but one can prove that this is equivalent to finding the largest for which a disk \emph{partition} exists with $k$ pieces.}
Likewise, the rotated square partitioning problem is closely related to the problem of covering a bigger square with a minimum number of unit squares, which has received some attention in mathematics~\cite{DBLP:journals/tamm/Januszewski09, DBLP:journals/jct/Soifer06}.
It is, for instance, not known if $6$ unit squares can cover a square of size larger than $2\times 2$~\cite{DBLP:journals/tamm/Januszewski09}.
See also the webpage~\cite{ErickPack} for an overview of the best-known solutions to these and similar problems.

We therefore develop approximation algorithms instead, i.e., algorithms that produce partitions of cardinality at most a constant factor $\alpha$ larger than that of an optimal partition.
We use $\opt$ to denote the cardinality of an optimal partition of a particular type.
We distinguish between \emph{estimation time} and \emph{construction time}.
Estimation time is the time it takes to compute a number $x$ such that $\opt\leq x\leq \alpha \cdot\opt$.
Construction time is the time it takes to construct an actual partition with at most $\alpha\cdot \opt$ pieces.
Note that $\Omega(n+\opt)$ time is needed for construction, and $\opt$ can be superpolynomial in $n$, for instance if $P$ is just a big square.
In contrast to many other partitioning problems, such as triangulation, trapezoidation or partitioning into convex pieces, it is therefore not possible to solve the construction problem in $O(\poly n)$ time.
Our results can be seen in \Cref{table:results}.

\begin{table}
\centering
\begin{tabular}{|l|l|l|l|}
\hline
Type & Apx.~factor & Estimation time & Construction time \\
\hline
\hline
Aligned square & $13$ & $O(n)$ & $O(n^2+\opt\log n)$ \\
\hline
Rotated square & $21$ & $O(n\log n)$ & $O(n^2+\opt\log n)$ \\
\hline
Disk & $20+\pi/2$ & $O(n\log n)$ & $O(n^2+\opt\log n)$ \\
\hline
Straight diameter & $20+\pi/2$ & $O(n\log n)$ & $O(n^2+\opt\log n)$ \\
\hline
Geodesic diameter & $72$ & $O(n^2\log n)$ & $O(n^2\log n+\opt\log n)$ \\
\hline
Perimeter & $3728$ & $O(n^2)$ & $O(n^2+\opt\log n)$ \\
\hline
\end{tabular}
\caption{Approximation factors and running times of our algorithms for solving the various partitioning problems.
Recall that $\opt$ denotes the cardinality of an optimal partition of the respective type and $n$ denotes the number of corners of $P$.}
\label{table:results}
\end{table}

\begin{theorem}\label{thm:main-2}
For aligned and rotated square, disk, straight and geodesic diameter, and perimeter partitions, there exist algorithms with running times $O(\poly n)$ for computing a constant factor estimate of $\opt$ and algorithms with running times $O(\poly n+\opt\log n)$ for constructing a partition with at most that cardinality.
The approximation factors and running times for each type are stated in \Cref{table:results}.
\end{theorem}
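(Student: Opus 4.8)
The plan is to treat all six problems with a common template: for a type $\tau$ I would isolate an efficiently computable quantity $\mu_\tau(P)$, prove $\opt=\Theta(\mu_\tau(P))$ with an explicit constant (here $\opt$ always refers to the type under consideration), and then give an estimation algorithm that simply evaluates $\mu_\tau(P)$ together with a construction algorithm that outputs a partition of size $O(\mu_\tau(P))$. The key structural point is that $\mu_\tau(P)$ is polynomially bounded in $n$ even though $\opt$ itself need not be --- a big square already forces $\opt$ superpolynomial --- which is exactly why estimation can run in $O(\poly n)$ while construction needs the extra $O(\opt\log n)$: each of the $\Theta(\opt)$ output pieces is produced with only $O(\log n)$ overhead for point location or for walking along a precomputed subdivision.

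For the lower bound $\opt\ge\Omega(\mu_\tau(P))$ I would use a packing argument. Fix a shape-dependent constant $c_\tau>0$ and take a maximal set $S\subseteq P$ of points that are pairwise more than $c_\tau$ apart --- geodesically, for the geodesic-diameter type, and in the plane otherwise. A ``small'' piece has small diameter (a perimeter-$1$ set has diameter at most $1/2$, a unit-disk set diameter at most $2$, a set contained in a unit square diameter at most $\sqrt 2$, and so on), so taking $c_\tau$ to be that diameter bound forces each piece of a partition to contain at most one point of $S$; hence $\opt\ge|S|$, and $\mu_\tau=|S|$ up to the constant. Maximality of $S$ gives $P\subseteq\bigcup_{p\in S}B(p,c_\tau)$, which is where the construction starts.

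For the construction a uniform grid is not an option: on a thin comb-like polygon it would slice every tooth and output $\Omega(n)$ pieces where $O(1)$ suffice (the phenomenon flagged in the introduction), so the decomposition must adapt to the local feature size of $P$. One natural route is to compute the medial axis and the induced corridor/junction decomposition of $P$ into $O(n)$ topological-disk subpolygons --- each either a ``corridor'' that is a thin chain of trapezoids bounded by two chains of $\partial P$, or a ``junction'' of bounded combinatorial complexity --- to partition each subpolygon into small pieces whose number is within a constant of its own optimum (a corridor behaves like a fat path parameterised by geodesic length; a junction is ``simple'' enough, e.g.\ monotonisable, to handle directly), and finally to merge adjacent subpolygons bottom-up along the tree structure of the decomposition so that the output pieces become as large as possible while staying small. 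The merging is what turns the $O(n)$ initial pieces into $O(\mu_\tau(P))=O(\opt)$ final ones. An alternative for the fat-shape types (square, disk, straight diameter) is to intersect $P$ with the Voronoi diagram of a suitably scaled net $S'$ of $P$: each restricted Voronoi cell lies in some $B(p,c_\tau)$ and so is small, but the cells can be disconnected and one must still bound the total number of connected components by $O(|S'|)$, which again requires the feature-size argument rather than a crude bound in terms of the length of $\partial P$.

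I expect the crux to be the interplay between thin features and the two bounds: showing (i) that the adaptive decomposition never produces more than $O(\opt)$ pieces even when $\partial P$ is enormously long compared to $\opt$, and (ii) that the packing lower bound already accounts for every genuinely forced piece, including those forced by long thin corridors rather than by area. Among the six, the perimeter variant looks hardest, because every internal cut contributes to the perimeter of the two pieces it separates, so a construction that is at all wasteful with cuts pays for it twice over, whereas the area-style packing lower bound does not see this extra cost; closing that gap with explicit constants is presumably what drives the factor into the thousands. The geodesic-diameter type additionally needs shortest-path and geodesic-distance structures on $P$, which is what pushes its running time to $O(n^2\log n)$, whereas the square and disk types, whose small shapes are fat, should be the cleanest and need no shortest-path machinery, consistent with their $O(n\log n)$ estimation time.
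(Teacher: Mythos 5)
There is a genuine gap, on both the estimation and the construction side. Your template hinges on proving $\opt=\Theta(\mu_\tau(P))$ for a packing number $\mu_\tau=|S|$, but you only argue the direction $\opt\geq|S|$; the converse is false for several of the types. Take a comb with $n$ teeth, each tooth a sliver of length $1$ and width $\varepsilon$, with all teeth packed into a planar region of width less than $1/2$. For perimeter partitions, the sum of the pieces' perimeters is at least $\peri P=\Theta(n)$, so $\opt=\Omega(n)$, while a maximal planar $1/2$-separated set in $P$ has $O(1)$ points. So evaluating $|S|$ does not estimate $\opt$ within any constant, and the same kind of example breaks the straight-diameter and square variants (a piece of small diameter says nothing about how much boundary it can absorb). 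The quantity that actually controls $\opt$ from below has to see the boundary of $P$, not just a net of points; the paper gets this by proving that a ``boundary partition'' (pieces covering all of $\partial P$, each meeting $\partial P$ in one interval) of size at most $2\,\opt-2$ always exists, via splitting the optimal pieces along shortest paths and a cycle-counting argument, and then a greedy maximal-interval sweep of $\partial P$ produces at most $2\,\opt-1$ boundary pieces.

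On the construction side, your rejection of the uniform grid is based on applying it to all of $P$; the paper's resolution of exactly the comb objection is to build the greedy boundary pieces \emph{first}, so that thin features are entirely swallowed by boundary pieces, and to apply the grid only to what remains. The leftover regions per grid square are then charged to $O(\opt)$ crossings of the (at most $2\,\opt$) convex hulls or shortest paths bounding the boundary pieces, plus an area term, which is how the $O(\opt)$ count is obtained. Your alternative --- a medial-axis corridor/junction decomposition into $O(n)$ subpolygons, solve each near-optimally, then merge bottom-up --- has no charging argument: even if each subpolygon is partitioned within a constant factor of \emph{its own} optimum, the sum over $O(n)$ subpolygons can be $\Omega(n)$ while $\opt=O(1)$ (optimal pieces straddle subpolygon boundaries), and the merging step you rely on to collapse this to $O(\opt)$ is precisely the missing lemma, not a routine cleanup. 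Without either the boundary-partition lemma or a substitute for it, neither the approximation factor nor the $O(\poly n)$ estimation bound follows.
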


\begin{figure}
\centering
\includegraphics[page=18]{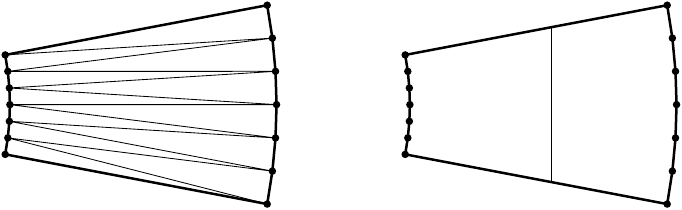}
\caption{The spirals indicate that we cannot use an algorithm for aligned square partitions or straight diameter partitions to get an algorithm for the other problem.}
\label{fig:EquiTri}
\end{figure}

Let us note that a $O(1)$-approximation algorithm for one of our problems does not seem to be useful (as a black box) in order to get a $O(1)$-approximation algorithm for another; see \Cref{fig:EquiTri}.
A polygon of straight-line diameter at most $1$ is also contained in an aligned unit square, but a straight diameter partition of the left spiral has $\Omega(n)$ pieces while an aligned square partition of one piece exists.
On the other hand, a polygon contained in an aligned square of side length $1/\sqrt 2$ has straight diameter at most $1$, but a partition into aligned $1/\sqrt 2$-squares of the right spiral needs $\Omega(n)$ pieces while a straight diameter partition of one piece exists.

A variant of the disk partitioning problem has been studied before when Steiner points are not allowed in the produced partition.
Damian and Pemmaraju~\cite{damian2004computing} described a polynomial-time algorithm to produce an optimal disk partition without using Steiner points.
Unfortunately, such a partition does often not exist.
For instance, no disk partition exists if $P$ is a triangle with an edge of length more than $2$.
The algorithm is therefore of little use in many situations. 

\begin{figure}
\centering
\includegraphics[page=1]{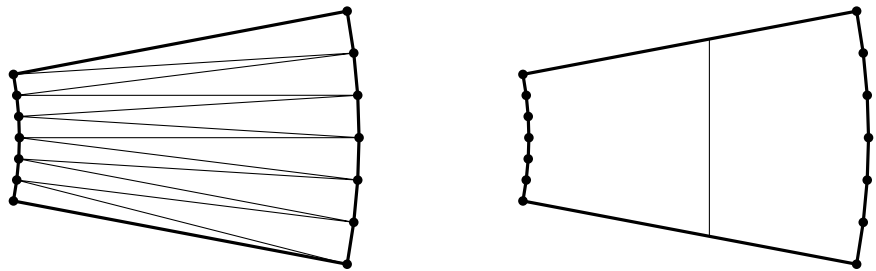}
\caption{Two disk partitions of the same polygon $P$.
The corners of $P$ are contained in two concentric circular arcs.
The concave arc has radius $1+\varepsilon$ for a small value $\varepsilon>0$.
The convex arc has radius $3$.
When Steiner points are not allowed, where $\Omega(n)$ pieces are needed in a disk partition.
With Steiner points, $2$ pieces are enough.}
\label{fig:Steiner}
\end{figure}

Another advantage of our partitions is that they are often significantly smaller than partitions without Steiner points.
\cref{fig:Steiner} shows that $\Omega(n)$ pieces are sometimes needed when Steiner points are not allowed for polygons where $2$ pieces are enough when they can be used.


An additional and very natural variant of the problem would be to partition $P$ into pieces of area at most $1$.
In fact, a partition into the optimal number $\lceil \area P \rceil$ of such pieces always exists.
We study the following more general \emph{area partitioning problem}:
Given a polygon $P$ and positive real values $a_1,\ldots,a_k$ with $\area P =\sum_{i=1}^k a_i$, compute a partition of $P$ into pieces $Q_1,\ldots,Q_k$ such that $\area Q_i=a_i$.
We show that such a partition can be found in optimal time $O(n+k)$ if $P$ is a simple polygon and in time $O(n\log n+k)$ if $P$ has holes.
This improves on a result by Bast and Hert~\cite{bast2000area}, who gave an algorithm with running time $O(nk)$.
Let us mention that in the context of manufacturing, pieces of bounded area can hardly be considered small, as they may have arbitrarily large diameter.

\begin{theorem}\label{thm:main-1}
There exists an algorithm that for a given simple polygon $P$ with $n$ corners and positive real values $a_1,\ldots,a_k$ with $\area P =\sum_{i=1}^k a_i$ produces a partition $Q_1,\ldots,Q_k$ with $\area Q_i =a_i$ in optimal time $O(n+k)$.
If $P$ is a polygon with holes, a partition can be found in time $O(n\log n+k)$.
\end{theorem}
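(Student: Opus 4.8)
The plan is to triangulate $P$ and then peel off the pieces $Q_1,\dots,Q_k$ in a single traversal of the triangulation, using amortization to remove the factor of $k$ from the running time of Bast and Hert.

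\textbf{Triangulation and the dual tree.} First triangulate $P$: for a simple polygon this can be done in $O(n)$ time, and for a polygon with holes in $O(n\log n)$ time, which already absorbs the extra logarithmic factor in that case. Let $T$ be the dual graph of the triangulation. When $P$ is simple, $T$ is a tree of maximum degree $3$, since each internal diagonal is shared by exactly two triangles and each edge of $P$ by one. Root $T$ at a leaf, i.e., at an ``ear'' of $P$, and run a depth-first search of $T$ from there.

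\textbf{The sweep.} During the DFS maintain the accumulated area $s$, the index $j$ of the piece currently under construction, and the amount $b=a_j-(s-\sum_{i<j}a_i)$ still needed to finish it. When the DFS enters a triangle $t$, assign portions of $t$ to the current and following pieces: while the area of $t$ not yet assigned exceeds $b$, cut $t$ by a single chord separating off exactly the area $b$ still required, finalize $Q_j$, pass to $Q_{j+1}$, and repeat; once the unassigned area of $t$ drops to at most $b$, the current piece swallows the rest of $t$ and the DFS descends into the children of $t$. Separating a prescribed sub-area from a triangle with one segment is an $O(1)$ computation, so each finalized piece costs $O(1)$ extra work. The diagonals on the current DFS frontier together with the single separating segment in the triangle where the last cut fell form the boundary between consecutive pieces, so each cut introduces only $O(1)$ Steiner points, and the total size of the output partition is $O(n+k)$.

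\textbf{Correctness of the pieces.} At any moment of the DFS, the set of fully processed triangles is connected, and each not-yet-visited subtree is attached to it along a single diagonal, so the processed region is always a simple polygon. The delicate point — and, I expect, the main obstacle in the proof — is to guarantee that every individual piece $Q_i$ is a \emph{connected} simple polygon while keeping the number of Steiner points $O(k)$. The danger occurs at a triangle $t$ of dual-degree $3$: if $t$ is consumed entirely by an earlier piece and the DFS must then descend into two different children of $t$, the piece that is active when the second child is entered could break into two components meeting only at $t$. The intended fix is to never consume a branching triangle completely before both of its child subtrees have been visited: retain a thin ``connector'' sliver of $t$ joining its two child-diagonals and hand it to whichever piece is active at the moment the second child is entered, so that piece stays connected through $t$. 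One then has to verify that at most one such connector is needed per branching triangle, that all connectors together contribute only $O(n)$ extra vertices, and that the exact target areas $a_i$ remain attainable after these local adjustments.

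\textbf{Running time and the case with holes.} Each triangle is entered and left $O(1)$ times by the DFS, contributing $O(n)$ in total, and each of the $k$ pieces is finalized in $O(1)$ additional time, so the algorithm runs in $O(n+k)$; this is optimal, since any algorithm must read the $n$ corners of $P$ and the $k$ values $a_i$ and output a partition of size $\Omega(n+k)$. If $P$ has $h$ holes, first turn it into a simple polygon by adding $h$ diagonals connecting the holes to the outer boundary — these can be extracted from the triangulation in $O(n)$ further time — and then apply the simple-polygon algorithm, giving the claimed $O(n\log n+k)$ bound.
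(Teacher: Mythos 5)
Your high-level strategy (triangulate, traverse the dual in linear time, greedily cut off pieces, amortize to $O(n+k)$) is the right one, and your handling of holes and your lower-bound remark are fine. But the step you yourself flag as ``the main obstacle'' --- keeping every piece connected --- is genuinely unresolved, and it is not a minor technicality. A contiguous interval of a DFS order of a tree is in general \emph{not} a connected subgraph, so the problem arises at every degree-$3$ triangle that a piece straddles, not only at triangles already consumed by earlier pieces. Your proposed fix of retaining a thin connector sliver conflicts with the exact-area requirement in two ways: the sliver has positive area, so the earlier piece that was supposed to absorb the rest of $t$ comes up short and must be compensated retroactively (threatening both correctness and the $O(1)$-per-piece accounting), and you cannot know which piece the sliver should be charged to until the first subtree has been fully processed. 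Making the connector degenerate (zero width) would instead produce a pinched, non-simple piece. None of the verifications you list at the end of your third paragraph is carried out, so the proof is incomplete exactly at its crux.

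The paper avoids the issue entirely by refining the triangulation rather than patching the traversal: it adds the three medians of every triangle, producing a Steiner triangulation $\T_S$ with $6n-12$ triangles in which every triangle has either an edge or a vertex on $\partial P$. The cyclic order of these triangles around $\partial P$ is then a Hamiltonian cycle $C$ in the dual of $\T_S$, and the vertices of $\T_S$ not on $\partial P$ form a tree (the ``wall''). Each piece is a contiguous run of triangles along $C$, so consecutive triangles share an edge and connectivity is automatic; each cut is a single segment joining the wall to $\partial P$, computable in $O(1)$ time. If you want to rescue your version, the cleanest route is to adopt this refinement (or any other device that makes the dual Hamiltonian) rather than to repair the DFS with connectors.
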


\subsection{Other practical motivation}

Our partitioning problems are motivated by various practical domains besides manufacturing, where it is necessary to partition large objects into few objects of bounded size, possibly with additional constraints on the pieces that we ignore in this paper.
Let us mention finite element analysis~\cite{GONZALEZHERRERA2005337,MCCLUNG1989253,MCCLUNG1989237,tomar2004bounds,tusnina2014}, collision detection~\cite{damian2004computing}, guarding and servicing~\cite{hert1998polygon,DBLP:journals/talg/CarlssonAY10}, shipping, laser capture microdissection~\cite{selbach2021shape}.

\subsection{Other related work}

Motivated by indoor localization using tripwire lasers, Arkin, Das, Gao, Goswami, Mitchell, Polishchuk and T{\'{o}}th~\cite{DBLP:conf/esa/ArkinD0GMPT20} recently studied problems of the type where we want to cut a polygon into small pieces using a minimum number of chords of $P$ (each chord corresponds to a laser that can detect if something is blocking the beam).
Fekete, Kamphans, Kr{\"{o}}ller, Mitchell and Schmidt~\cite{fekete2011exploring} (see also the video~\cite{DBLP:conf/compgeom/BeckerFKLMS13}) studied the problem of finding a Steiner triangulation of a simple polygon where each edge has length at most $1$.
Motivated by the construction of communication graphs for robots, they wanted to minimize the number of vertices and gave a $3$-approximation algorithm.

Worman~\cite{worman2003decomposing} showed that it is NP-hard to find an optimal aligned square partition of a polygon \emph{with holes} when Steiner points are not allowed by a reduction from Planar 3,4-SAT.
Buchin and Selbach~\cite{buchin2021decomposing} recently showed that the same holds for disk partitions by a similar reduction.

The algorithmic problems of partitioning polygons into pieces of bounded size have older roots in pure mathematics.
For instance, in the \emph{Borsuk problem}, we consider a convex body $P$ in $\mathbb R^d$ of diameter $1$ and want to partition $P$ into a minimum number of pieces, each of which has diameter less than $1$.
This problem has a rich history; see~\cite{borsuk1933drei,DBLP:journals/combinatorics/JenrichB14,kahn1993counterexample} for a few references.
In \emph{Conway's fried potato problem}~\cite{bezdek1995solution, bezdek1996conway,canete2022conway,croft2012unsolved}, we seek to minimize the
maximum in-radius of each piece after a given number of successive cuts by hyperplanes for a given convex polyhedron in $\mathbb R^d$.

Specialized versions of the area partitioning problem have also been studied~\cite{adjiashvili2010equal,DBLP:journals/talg/CarlssonAY10,hert1998polygon}.
Other problems related to the area partitioning problem are the \emph{equipartition problems}, where we seek to partition a convex polygon into convex pieces all having the same area or the same perimeter or both (or other measures)~\cite{DBLP:journals/tcs/ArmaseluD15, blagojevic2014convex,guardia2005equipartition,karasev2014convex,nandakumar2012fair}.

\subsection{Technical overview}\label{sec:techniques}

\begin{figure}
\centering
\includegraphics[page=30]{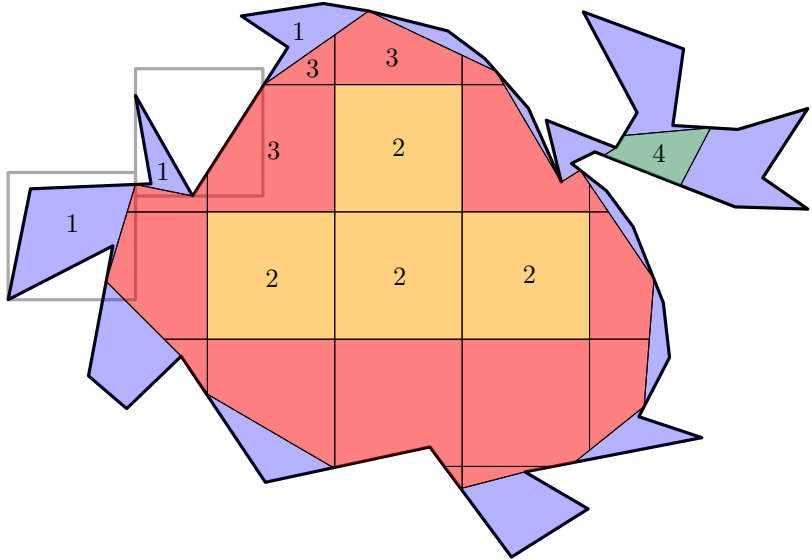}
\caption{An aligned square partition.
The boundary pieces are blue, and they cover all of the boundary, although they are in some places degenerate so that they cannot be seen.
The yellow, red and green pieces are so-called complete, edge and chip pieces, respectively. Color codes: 1 blue, 2 yellow, 3 red, 4 green.
}
\label{fig:SquarePartition}
\end{figure}

For all the variants, except the area partitioning problem, we use the same general technique that consists of two main steps; see \Cref{fig:SquarePartition} for an example of an aligned square partition produced by our algorithm.

\begin{enumerate}
\item \label{item:step1}
We first construct a set of pieces $\CQ_\partial$ that cover all of the boundary of $P$, as described in \Cref{sec:boundary-pieces}.
For each variant, we describe in \Cref{sec:maxintervals} a greedy algorithm that repeatedly finds a maximal interval on the boundary of $P$ that can be contained in a single piece.
\Cref{sec:constructingpieces} describes how we turn these intervals into actual pieces, and the method depends on the specific problem:
For square\footnote{When we write about ``square  partitions'' without specifying whether it is aligned or rotated square partitions, we mean both types.}, disk and straight diameter partitions, we ``blow up'' a piece as a balloon, starting from its interval on the boundary of $P$, until we reach either the boundary of $P$, a previously constructed piece or the boundary of the convex hull of the interval.
For geodesic diameter and perimeter partitions, it does not work to blow up pieces like that, as it can result in pieces of too large geodesic diameter or perimeter.
Instead, we connect the endpoints of the interval with a shortest path in $P$, and the piece is the region enclosed by the union of the interval and this path.

These pieces are called \emph{boundary} pieces.
As we will see, constructing the pieces in a greedy manner results in at most $2\,\opt$ pieces (recall that $\opt$ denotes the cardinality of an optimal partition of the entire polygon $P$).

\item \label{item:step2}
It remains to construct the \emph{interior} pieces $\mathcal Q_\circ$, i.e., the pieces that do not meet $P$ at the boundary, and this is described in
\Cref{sec:interiorpieces}.
To this end, we consider a regular square grid that covers all of $P$, where the edge length of each square is some constant $\gamma>0$ to be chosen depending on the problem.
This grid is used to construct the \emph{interior pieces} $\mathcal Q_\circ$.
Square, disk and straight diameter partitions are handled in \Cref{sec:DSSDinterior}.
Here, the interior pieces are simply chosen as the regions in each grid square that are inside $P$ but not inside a boundary piece.
An interior piece is either \emph{complete}, i.e., it is a complete square, or \emph{incomplete}, in which case it shares boundary with a boundary piece.
By an area argument, there can be $O(\opt)$ complete pieces.
Due to the construction of the boundary pieces $\CQ_\partial$, an incomplete piece has a corner that is an intersection point between a grid line and the boundary of the convex hull of a boundary piece.
There can be $O(1)$ such intersection points per boundary piece, and they can be used to account for the incomplete pieces.
Since there are $O(\opt)$ boundary pieces, we then get $O(\opt)$ interior pieces, and thus $O(\opt)$ pieces in total.

\Cref{sec:interiorgeodesic} describes the construction for geodesic diameter and perimeter partitions, and it is more involved due to two different issues:
1) A region in a square can have arbitrarily large geodesic diameter or perimeter.
For instance, a region can have the shape of a long spiral and thus violate the size constraint.
2) There can be too many regions.
For instance, a single boundary piece can have a spiralling behavior, crossing back and forth between two neighbouring grid squares and thus creating arbitrarily many small regions in those squares.
To address the first issue, we develop a method to split each region into subregions, each of which has bounded geodesic diameter and perimeter:
A region is split by choosing a fixed point in the region and cutting along shortest paths to all points on the boundary where two boundary pieces meet, as well as some additional points.
Each of the resulting subregions will be bounded by three concave chains, so it has bounded geodesic diameter and perimeter.
To address the second issue, we define our pieces as certain unions of these subregions, which may be contained in different but neighbouring grid squares.
That makes it possible to bound the number of pieces as $O(\opt)$ using bounds on the number of boundary pieces and the number of squares in the grid.
\end{enumerate}

For the area partitioning problem, we first construct a \emph{Hamiltonian triangulation} of the polygon $P$ using Steiner points.
This is a triangulation where the dual graph contains a Hamiltonian cycle $C$~\cite{arkin1996hamiltonian}.
The Hamiltonian triangulation is obtained from a usual triangulation by adding the medians of all the triangles.
We then construct the pieces of our partition in a greedy manner as we traverse the cycle $C$.
\Cref{fig:AreaPartition} shows an example of the triangulation and a partition produced by the algorithm. The details are given in \Cref{sec:area}.

\begin{figure}
\centering
\includegraphics[page=21]{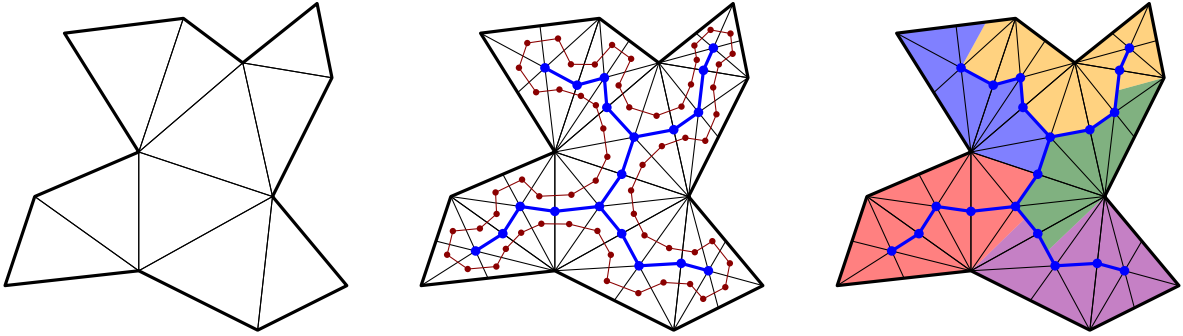}
\caption{Left: A triangulated polygon.
Middle: The triangulation we obtain after adding all medians of the triangles.
The Hamiltonian cycle $C$ is also shown.
Right: An area partition into five pieces produced by our algorithm.}
\label{fig:AreaPartition}
\end{figure}

\section{Preliminaries}

A \emph{simple polygon} is a compact region in the plane whose boundary is a simple, closed curve consisting of finitely many line segments.
For technical reasons, we allow the pieces of a partition to be weakly simple polygons. 
A \emph{weakly simple polygon} $Q$ is a simply-connected and compact region in the plane whose boundary is a union of finitely many line segments.
In particular, a simple polygon is also a weakly simple polygon, but the opposite is not true in general.
For instance, a weakly simple polygon $Q$ may have a disconnected or even empty interior.
However, just as for a simple polygon, a weakly simple polygon $Q$ can be defined by its edges in counterclockwise order around the boundary.
These edges form a closed boundary curve $\lambda$ of $Q$.
Since $Q$ is weakly simple, some corners may coincide, and edges may overlap.
After a perturbation of $\lambda$ that is arbitrarily small with respect to Frechet distance, $Q$ can be turned into a simple polygon~\cite{DBLP:journals/dcg/AkitayaAET17,DBLP:conf/soda/ChangEX15}.
This perturbation may involve the introduction of more corners.
For instance, if $Q$ is just a line segment, then $Q$ has only two corners, and one more is needed to obtain a simple polygon.

We denote the boundary of a weakly simple polygon $P$ as $\partial P$.
For two boundary points $a,b\in\partial P$, we denote by $\partial P[a,b]$ the interval on the boundary from $a$ to $b$ in counterclockwise direction.
If $P$ is not simple, then the interval $\partial P[a,b]$ is in general not well-defined, since the boundary $\partial P$ can meet itself several times at $a$ or $b$.
However, we will only use the notation when it is clear from the context which interval we mean.
As with intervals on the real number line, we can use square or round brackets to indicate whether each of the endpoints $a$ and $b$ are included in the interval or not.
A boundary interval is \emph{convex} if it contains no concave corner of $P$ in its interior, and \emph{concave} if it contains no convex corner in its interior.

We now define a \emph{partition} of a polygon $P$ to be a set of weakly simple polygons $Q_1,\ldots,Q_k$ such that after an arbitrarily small perturbation of the pieces, we obtain \emph{simple} polygons $Q'_1,\ldots,Q'_k$ with the following properties:

\begin{enumerate}
    \item The polygons $Q'_1,\ldots,Q'_k$ are pairwise interior-disjoint. \label{partprop:1}
    
    \item $\bigcup_{i=1}^k Q'_i =P$. \label{partprop:2}
\end{enumerate}

Note that it follows that the weakly simple polygons $Q_1,\ldots,Q_k$ must also have properties \ref{partprop:1} and \ref{partprop:2} (with $Q'_i$ replaced by $Q_i$ for all $i\in\{1,\ldots,k\}$), since otherwise a large perturbation would be needed for them to be transformed into simple polygons with the required properties.
However, it would not be sufficient to define a partition as a set of weakly simple polygons with properties \ref{partprop:1} and \ref{partprop:2} alone. This would, for instance, allow two pieces with empty interiors (such as two segments) to properly intersect each other, which is not intended.

For a weakly simple polygon $P$, we define the \emph{straight-line diameter} (or simply \emph{straight diameter}) of $P$ as $\diam P = \sup_{p,q \in P} \norm{p-q}$.
For two points $p,q\in P$, we denote by $\pi(p, q)=\pi_P(p, q)$ a shortest path in $P$ from $p$ to $q$.
We define the \emph{geodesic diameter} of $P$ as $\sup_{p,q \in P} \Vert\pi_P(p,q)\Vert$, where $\Vert\cdot\Vert$ denotes the length.
Note that $\pi(p, q)$ is a polygonal curve whose corners, except possibly the endpoints, are concave corners of $P$.

\section{Computing the boundary pieces}\label{sec:boundary-pieces}

In this and the next section, we turn our attention to square, disk, straight and geodesic diameter and perimeter partitions in order to prove~\Cref{thm:main-2}.
A \emph{boundary partition} of a polygon $P$ is a set of weakly simple polygons $Q_1,\ldots, Q_k$ of bounded size such that after an arbitrarily small perturbation of the pieces, we obtain \emph{simple} polygons $Q'_1,\ldots, Q'_k$ with the following properties:

\begin{enumerate}
\item The polygons $Q'_1,\ldots, Q'_k$ are pairwise interior-disjoint.

\item Each polygon $Q'_i$ is contained in $P$.

\item $\partial P\subset \bigcup_{i=1}^k \partial Q'_i$.

\item $\partial P\cap \partial Q'_i$ is connected for each polygon $Q'_i$.
\end{enumerate}

In particular, it follows that $\partial P\cap \partial Q_i$ may be disconnected for a piece $Q_i$ in a boundary partition, but after an arbitrarily small perturbation, we get a piece $Q'_i$ where  $\partial P\cap \partial Q'_i$ is connected, so other pieces must ``take over'' covering the parts of $\partial P$ that $\partial Q_i$ covers but $\partial Q'_i$ does not.
\Cref{fig:boundaryperturb} demonstrates the perturbation.
See also~\Cref{fig:SquarePartition}, where the blue pieces form a boundary partition produced by one of our algorithms.
We require the pieces $Q_1,\ldots,Q_k$ of a boundary partition to have bounded size, but the perturbed pieces $Q'_1,\ldots,Q'_k$ might violate the size constraint.

\begin{figure}
\centering
\includegraphics[page=3]{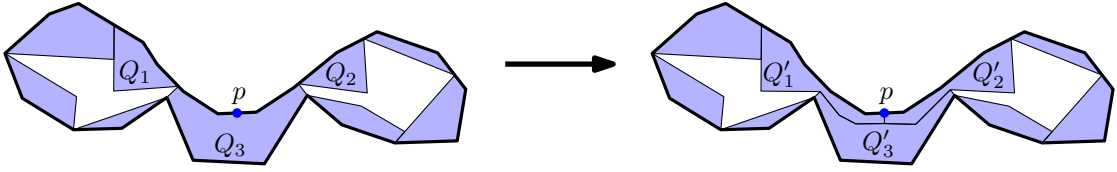}
\caption{The blue pieces form a boundary partition of a polygon $P$.
The pieces $Q_1$ and $Q_2$ are degenerate and follow the boundary of $P$ to the point $p$.
It is shown how we can make a small perturbation so that the intersection of each piece with $\partial P$ is just a single interval.}
\label{fig:boundaryperturb}
\end{figure}

Note that more than $\opt$ pieces may be needed to create a boundary partition because of the last requirement that $\partial P\cap \partial Q'_i$ be connected.
However, as the following lemma shows, at most $2\,\opt -2$ are needed.

\begin{lemma}\label{lem:size-boundary-partition}
For each type of partition mentioned in \cref{thm:main-2} and any simple polygon $P$, there exists a boundary partition of $P$ of cardinality at most $2\,\opt-2$.
\end{lemma}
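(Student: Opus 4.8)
The plan is to take an optimal partition $\CQ^* = \{Q_1^*, \ldots, Q_{\opt}^*\}$ of $P$ (of the relevant type) and show how to refine it into a boundary partition without more than doubling the number of pieces. First I would observe that the pieces of $\CQ^*$ that actually touch $\partial P$ induce a cyclic structure: walking once around $\partial P$, we encounter a cyclic sequence of pieces $Q_{i_1}^*, Q_{i_2}^*, \ldots$, and each piece appears as a union of finitely many arcs of $\partial P$. If a piece $Q_i^*$ meets $\partial P$ in $t_i \geq 1$ disjoint arcs, I would ``cut'' $Q_i^*$ into $t_i$ sub-pieces, one responsible for each arc, by slicing along curves through the interior of $Q_i^*$. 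Each sub-piece is contained in $Q_i^*$, hence still of bounded size (for the disk, square, and straight-diameter notions this is immediate since a subset of a bounded piece is bounded; for geodesic diameter and perimeter one must be slightly careful, but the perturbation slack already built into the definition of boundary partition absorbs this — alternatively one routes the cutting curves close to $\partial Q_i^*$). After this surgery, the new collection covers $\partial P$, each new piece meets $\partial P$ in a single arc, and the pieces are interior-disjoint; the interior pieces of $\CQ^*$ not touching $\partial P$ are simply discarded (they are not needed to cover $\partial P$). This gives a boundary partition.

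The remaining task is the counting: the number of new boundary pieces equals $\sum_i t_i$, the total number of boundary arcs, where the sum is over pieces $Q_i^*$ touching $\partial P$. So I need $\sum_i t_i \leq 2\,\opt - 2$. Here is where I would use a planarity / Euler-type argument. Consider the planar subdivision of $P$ induced by $\CQ^*$. The arcs of $\partial P$ together with the chords of $\partial Q_i^*$ interior to $P$ form a plane graph; the bounded faces that touch $\partial P$ are grouped into the pieces. The quantity $\sum_i t_i$ counts boundary arcs, and two consecutive boundary arcs (in the cyclic order around $\partial P$) are separated by a point where the boundary passes from one piece to another — so $\sum_i t_i$ equals the number of such ``transition points'' on $\partial P$. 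Each transition point is the endpoint of an interior chord (the boundary of the piece leaving $\partial P$ at that point). Treating the pieces-touching-$\partial P$ and their adjacencies as a graph $H$ (a vertex per such piece, an edge whenever two share an interior boundary curve that borders the outer region pattern), one gets that $H$ is connected and outerplanar-like, so the number of edges, hence the number of transition points, is at most $2(\text{number of pieces touching }\partial P) - 2 \leq 2\,\opt - 2$. I would make this precise by noting the cyclic sequence of boundary arcs, read off $\partial P$, visits each of the (say) $m \leq \opt$ boundary-touching pieces, and consecutive arcs from the same piece can be merged in the cyclic sense only via the piece's interior; a careful accounting of the ``interior boundary graph'' being a forest plus the cyclic wrap-around yields the bound $\sum_i t_i \leq 2m - 2 \leq 2\,\opt - 2$.

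The main obstacle is making the combinatorial counting rigorous while handling degeneracies: weakly simple pieces, pieces meeting $\partial P$ only in isolated points or degenerate segments, and the perturbation subtleties in the definition of a partition. The clean statement ``$\sum t_i \le 2m-2$'' is essentially the statement that a connected plane multigraph on $m$ vertices drawn so that all vertices lie on the outer face has at most... — but this needs the right graph model. I expect the honest argument to set up a planar graph whose vertices are the boundary-touching pieces, whose edges correspond to maximal shared boundary portions, show it is connected, embed it so that the cyclic boundary structure of $P$ corresponds to a walk around this graph, and then apply the fact that such a graph (a connected plane graph all of whose vertices are incident to one distinguished face) has at most $2m - 2$ edges — or equivalently count via Euler's formula $v - e + f = 2$ after contracting the interior faces. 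The bound is tight (consider $P$ a thin ann­ulus-like polygon), which is a good sanity check that the constant $2$, and the $-2$, are correct. Everything else — the surgery producing bounded-size sub-pieces, interior-disjointness, covering $\partial P$ — is routine given the slack in the boundary-partition definition.
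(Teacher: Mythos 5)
Your overall strategy is the same as the paper's: restrict attention to the pieces of an optimal partition that meet $\partial P$, cut each such piece into one sub-piece per boundary arc along curves through its interior, discard the purely interior pieces, and bound the total number of arcs (equivalently, transition points) by $2m-2$ where $m\leq\opt$ is the number of boundary-touching pieces. The paper's counting is an induction on the closed walk of transitions around $\partial P$ (splitting the walk at a repeated piece into two sub-walks sharing only that piece, which is where planarity enters); your Euler-formula/outerplanarity sketch is the same bound by an equivalent route, though you leave it at the level of ``I expect the honest argument to\ldots''.

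The one concrete soft spot is the choice of cutting curves. The paper cuts each piece $Q$ along the \emph{shortest path in $Q$} between the endpoints of each boundary arc $I_i$, and this specific choice is what makes ``bounded size'' actually hold for all six types: for the perimeter case, the shortest path is no longer than either of the two boundary arcs of $Q$ it separates, so every sub-piece has perimeter at most $\peri Q\leq 1$; for the geodesic-diameter case, cutting along a geodesic preserves shortest paths inside each sub-piece. Neither of your two proposed workarounds achieves this. The perturbation slack in the definition of a boundary partition is only an infinitesimal perturbation (and the downstream use in \cref{lemma:boundarybound} genuinely needs each covered interval of $\partial P$ to be coverable by a piece satisfying the size constraint), so it cannot absorb a constant-factor violation. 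And routing the cut close to $\partial Q$ produces a thin sub-piece whose perimeter is roughly twice the length of the boundary arc it hugs, which can exceed $1$ even when $\peri Q\leq 1$, and whose geodesic diameter is likewise not controlled. So for four of the six types your construction is fine as stated (a subset of a bounded piece is bounded), but for geodesic diameter and perimeter you need the shortest-path cut; with that substitution your argument matches the paper's.
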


\begin{proof}
Consider any of the six types of partitions and let $\CQ^{*}$ denote an optimal partition of $P$ of that type so that $|\CQ^{*}| = \opt$.
Let $\CQ^{*}_{b} \subset \CQ^{*}$ denote the set of pieces of $\CQ^{*}$ that intersect $\partial P$, i.e., 
\[
\CQ^{*}_{b} = \lrc{Q \in \CQ^{*} \mid \partial Q \cap \partial P \neq \emptyset}.
\]

Let $C$ denote the directed cycle with vertex set $\CQ^{*}_{b}$ and edges induced by traversing $\partial P$ in counterclockwise order:
Whenever we leave a piece $Q$ and enter another piece $Q'$, we add an edge from $Q$ to $Q'$ in $C$; see Figure~\ref{fig:boundarysplit}.
(If some pieces of $\CQ^{*}_{b}$ are degenerate weakly simple polygons, this cycle $C$ may not be well-defined, but then we can consider an infinitesimal perturbation of $\CQ^{*}_{b}$ instead, resulting in pairwise interior-disjoint simple polygons that we use to define $C$.)
We now observe that $|E(C)| \leq 2m - 2$ by induction on the number of pieces $m=|\CQ^{*}_{b}|$.
The base-case is $|\CQ^{*}_{b}|=1$, where the claim holds trivially.
Suppose that the claim holds for any cycle of at most $m-1$ pieces, and consider a cycle $C$ of $m=|\CQ^{*}_{b}|$ pieces.
If $C$ is a simple cycle of $m\geq 2$ pieces, then $|E(C)|=m\leq 2m-2$.
Otherwise, there is a piece $Q$ that is visited at least twice as we traverse $C$.
We can then write $C=C_1\cup C_2$, where $C_1$ and $C_2$ are two cycles that are disjoint except that $Q$ appears in both, so we can write the numbers of pieces as $m_1$ and $m_2$, respectively, where $m=m_1+m_2-1$.
We then get from the induction hypothesis that $|E(C)|=|E(C_1)|+|E(C_2)|\leq 2m_1-2+2m_2-2=2(m_1+m_2-1)-2=2m-2$.

\begin{figure}
\centering
\includegraphics[page=25]{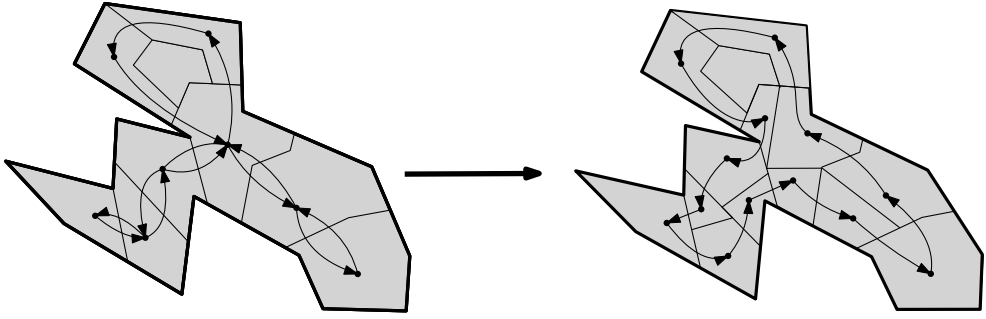}
\caption{We split the pieces of an optimal partition (left) along shortest paths in order to obtain a partition that includes a boundary partition (right).
For clarity, the endpoints of the shortest paths are not chosen as described in the text.}
\label{fig:boundarysplit}
\end{figure}

We now partition each piece $Q\in\CQ^{*}_{b}$ into $d=\deg^-(Q)=\deg^+(Q)$ pieces as follows, where $\deg^-(\cdot )$ and $\deg^+(\cdot )$ denote the in- and out-degree, respectively.
Note that $d$ is equal to the number of intervals in $\partial P\cap \partial Q$.
Denote these intervals as $I_1,\ldots,I_d$.
For $i\in\{1,\ldots,d-1\}$, we split $Q$ along a shortest path in $Q$ between the endpoints of $I_i$.
It is then clear that each of the resulting $d$ pieces has bounded size and that we obtain a boundary partition of $P$.

We can then bound the number of pieces in our boundary partition as $\sum_{Q \in \CQ^{*}_{b}} \mathrm{deg}^-(Q) = |E(C)| \leq 2m-2\leq 2\,\opt - 2$, and the lemma holds. 
\end{proof}

All of our algorithms construct the boundary pieces in a greedy manner, as follows.
We pick an arbitrary point $a_0\in\partial P$.
For $i=1,2,\ldots$, we then choose the next point $a_i$ such that the interval $\partial P[a_{i-1},a_i]$ is as large as possible while it can be covered by a single piece $Q_i$ satisfying the respective size constraint.
We stop when we get to a point $a_{k-1}$ such that $\partial P[a_{k-1},a_0]$ can be covered by a piece $Q_k$ of bounded size.
Due to the greedy choices, the following lemma easily follows from~\Cref{lem:size-boundary-partition}.
As an upper bound on $|\CQ_\partial|$, we have $2\,\opt-1$ instead of $2\,\opt-2$, since we choose an arbitrary starting point $a_0$, which can result in an excess of one boundary piece.

\begin{lemma}\label{lemma:boundarybound}
Let $a_0=a_k$ be an arbitrary point on $\partial P$ and let $\CQ_\partial=\{Q_1,\ldots,Q_k\}$ be a set of pieces of bounded size where $Q_i$ covers the interval $\partial P[a_{i-1},a_i]$ and for $i=1,\ldots,k-1$, the endpoint $a_i$ is chosen so that a piece of bounded size cannot cover a larger interval starting at $a_{i-1}$.
Then $k=|\CQ_\partial| \leq 2\,\opt-1$.
\end{lemma}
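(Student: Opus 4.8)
The plan is to compare the greedy boundary cover $\CQ_\partial$ against the boundary partition guaranteed by \Cref{lem:size-boundary-partition}, which has at most $2\,\opt-2$ pieces. Let $\CP=\{P_1,\ldots,P_m\}$ be such a boundary partition, so $m\le 2\,\opt-2$, and recall that each $P_j$ meets $\partial P$ in a single connected interval $J_j$. These intervals $J_1,\ldots,J_m$ cover $\partial P$, and we may list them in the cyclic order in which they appear along $\partial P$; after discarding redundant ones we may assume they form a minimal cyclic cover, i.e., no interval is contained in the union of the others, so consecutive intervals overlap and each point of $\partial P$ lies in at most two of them. In particular there is a cyclic sequence of at most $m$ arcs covering $\partial P$ such that each arc is coverable by a single piece of bounded size.

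Next I would run the standard "greedy is at most twice the fractional/covering optimum" argument on a circle. Starting from $a_0$, each greedy step from $a_{i-1}$ extends as far as any single bounded-size piece can reach; in particular it reaches at least as far as the far endpoint of whichever $J_j$ contains $a_{i-1}$ (that far endpoint is reachable by $P_j$, hence by a piece of bounded size, so greedy's maximality guarantees $a_i$ is no closer than it). The key combinatorial claim is then: the greedy breakpoints $a_0,a_1,\ldots,a_{k-1}$ cannot have three of them, say $a_{i-1},a_i,a_{i+1}$, all lying in a single arc $J_j$ — because if $a_{i-1}\in J_j$ then the greedy choice of $a_i$ already jumps to or past the far endpoint of $J_j$, so $a_i\notin \operatorname{int} J_j$ in the relevant direction, and $a_{i+1}$ is even farther. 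Hence each arc $J_j$ contains at most two consecutive greedy breakpoints, and since the $J_j$'s cover the circle with at most double overlap, a counting argument gives $k \le$ (roughly) the number of arcs plus a small additive term.

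To make the counting precise I would set it up as follows. Charge each greedy interval $\partial P[a_{i-1},a_i]$ to an arc $J_{j(i)}$ containing its left endpoint $a_{i-1}$. By the claim above, each arc receives at most two charges (the interval whose left endpoint first enters $J_j$, and at most one more), but more carefully: if $a_{i-1}\in J_{j}$, the next breakpoint $a_i$ lies at or beyond $J_j$'s right endpoint, so the interval after that, $\partial P[a_i,a_{i+1}]$, has its left endpoint outside $J_j$; thus $J_j$ is charged at most once if we break ties consistently, plus we must account for the first interval and the wrap-around. Summing over the $m$ arcs yields $k\le m+1 \le (2\,\opt-2)+1 = 2\,\opt-1$.

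The main obstacle is handling the circular (wrap-around) structure and the overlap bookkeeping cleanly: on a cycle there is no natural "first" interval, the last greedy interval $\partial P[a_{k-1},a_0]$ is only required to be coverable (not maximal), and degenerate boundary pieces / coincident breakpoints may force an infinitesimal-perturbation argument as in \Cref{lem:size-boundary-partition}. I would deal with this by cutting the cycle at $a_0$, arguing on the resulting linear sequence where the "greedy $\le 2\cdot$ interval-cover" bound is standard, and then verifying that reclosing the cycle costs at most the one extra unit already absorbed in the $m+1$ bound. I expect the arithmetic to land exactly on $2\,\opt-1$ once the off-by-one from the open last interval and the $-2$ in \Cref{lem:size-boundary-partition} are tracked together.
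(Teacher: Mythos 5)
Your proposal is correct and follows essentially the same route as the paper: both compare the greedy pieces against the boundary partition of cardinality at most $2\,\opt-2$ from \cref{lem:size-boundary-partition} and use maximality to argue that each greedy interval must pass the far endpoint of a reference interval, yielding $k\leq 2\,\opt-1$. The paper states this more compactly by observing that each half-open interval $\partial P(a_{i-1},a_i]$ for $i\leq k-1$ contains the last point of some reference interval, which absorbs the wrap-around bookkeeping you handle via the $m+1$ charging count.
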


\begin{proof}
By \cref{lem:size-boundary-partition} there exists a boundary partition, $\CQ^{*}_{\partial}$, of cardinality at most $2\,\opt-2$.
Since the interval $\partial P(a_{i-1},a_i]$ is chosen to be maximal, it must contain the last point of an interval covered by one of the pieces in $\CQ^{*}_{\partial}$.
In other words, there is at least one piece in $\CQ^{*}_{\partial}$ for each of the pieces $Q_1,\ldots,Q_{k-1}$, and we have $|\CQ_\partial| \leq 2\,\opt-1$.
\end{proof}

Next, we first describe how to find the maximal intervals and then how we construct the actual pieces.

\subsection{Finding maximal intervals}\label{sec:maxintervals}
In the following we show for each type of partition how to obtain a single maximal interval $\partial P[a_{i-1}, a_i]$ for a given starting point $a_{i-1}$.
The remaining intervals are computed by applying this algorithm repeatedly until we get to a point $a_{k-1}$ where $\partial P[a_{k-1}, a_0]$ can be covered by a piece of bounded size.
The point $a_0$ is chosen as an arbitrary point on $\partial P$. 
For most types of partitions, the algorithm consists of two phases.

\begin{enumerate}
\item \label{int:phase1} Given $a_0$ and $a_{i-1}$, consider the interval $\partial P[a_{i-1},a_0]$ that remains to be covered, and let $\lrc{p_1, \ldots, p_m}$ denote the corners of this interval in order from $a_{i-1}$, so that $p_1=a_{i-1}$ and $p_m=a_0$.
In this first phase we determine the largest index $s\in\{2,\ldots,m+1\}$ such that there exists a piece of bounded size that contains the interval $\partial P[p_1,p_{s-1}]$.
If $s=m+1$, it means the remaining part of $\partial P$ can be covered by a single piece, and we have found all our intervals and proceed to construct the pieces as described in \Cref{sec:constructingpieces}.
For each type of partition, we describe how to find the index $s$.
    
\item \label{int:phase2} The maximum interval $\partial P[a_{i-1},a_i]$ that can be covered by a single piece must consist of the interval $\partial P[p_1,p_{s-1}]$ and a prefix of the edge $p_{s-1}p_s$.
In this second phase we find the point $a_i \in p_{s-1}p_{s}$ such that $p_{s-1}a_i$ is this prefix.
\end{enumerate}

\pparagraph{Aligned square partitions.}
We describe the two phases for finding the maximal intervals for the aligned square partitions.

\begin{enumerate}
\item To find the index $s$, we run through the points $\{p_1,\ldots,p_m\}$ and keep track of the smallest and largest $x$- and $y$- coordinates, which define the smallest containing axis-aligned square.
The index $s$ is thus found in $O(s)$ time.

\item Let $R$ be the minimum axis-aligned bounding rectangle of the interval $\partial P[p_1,p_{s-1}]$.
There is an axis-aligned unit square $S$ covering a maximum interval of $\partial P[a_{i-1},a_0]$ that also has a corner coincident with the corresponding corner of $R$ (for instance, if the $x$- and $y$-coordinates of $a_i$ are at least as large as those of $a_{i-1}$, we can choose $S$ with the same lower left corner as $R$).
There are four unit squares with this property, and we just need to test which one covers most of $p_{s-1}p_s$; see \Cref{fig:cases-square}.
\end{enumerate}

In total, the point $a_i$ can be found in $O(s)$ time.

\begin{figure}
\centering
\includegraphics[page=22]{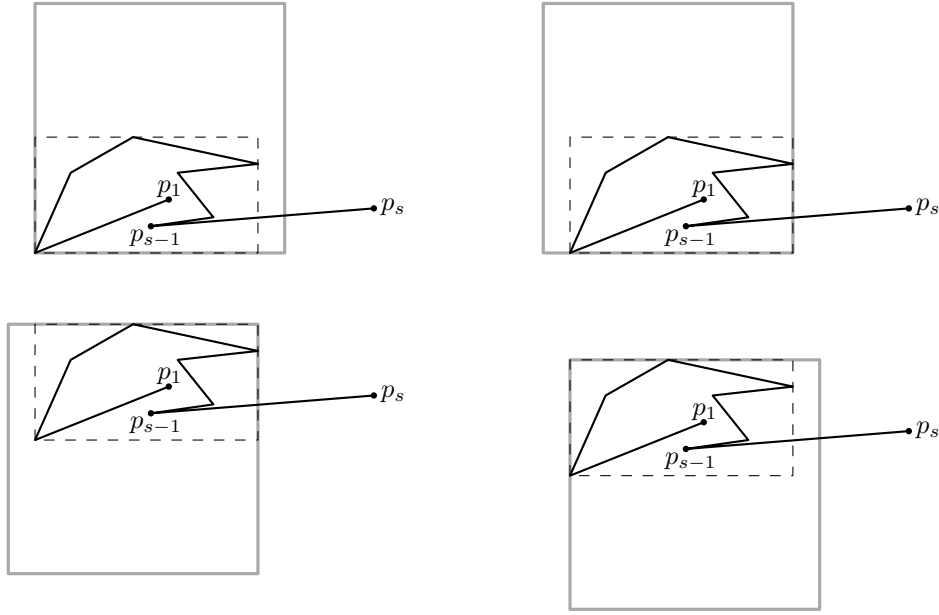}
\caption{We test four unit squares in order to find one that covers most of the edge $p_{s-1}p_s$.
}
\label{fig:cases-square}
\end{figure}

\pparagraph{Rotated square partitions.}
We describe the two phases for finding the maximal intervals for rotated square partitions.

\begin{enumerate}
\item \label{item:1rotatedsquare}
We perform an exponential search among the points $\{p_1,\ldots,p_m\}$ in order to find the index $s$.
In each round of the search, we need to test if an interval $\partial P[p_1,p_j]$ can be covered by a rotated square.
To this end, we first compute the convex hull $\mathcal C$ of $\partial P[p_1,p_j]$, which can be done in $O(j)$ time, for instance using Melkmann's algorithm~\cite{DBLP:journals/ipl/Melkman87}.
Toussaint~\cite{toussaint1983solving} described an algorithm that runs through the angle interval $\varphi\in[0,\pi/2)$ and at all times keep track of the smallest rectangle $R_\varphi$ containing $\mathcal C$ and with a pair of edges making the angle $\varphi$ with the $x$-axis.
The algorithm uses the technique of rotating calipers and the running time is $O(j)$.
Let $\ell_\varphi$ be the length of the longest edges of $R_\varphi$ (it may change several times which pair of edges are longest).
Thus, if ever $\ell_\varphi\leq 1$, there exists a unit square containing $\mathcal C$, otherwise not.
We can therefore determine in $O(j)$ time if there exists a unit square containing $\partial P[p_1,p_j]$ and use in total $O(s\log s)$ time to find $s$.
Another algorithm is described in~\cite{DBLP:journals/tcs/BeregB0KMS18}.

\begin{figure}
\centering
\includegraphics[page=27]{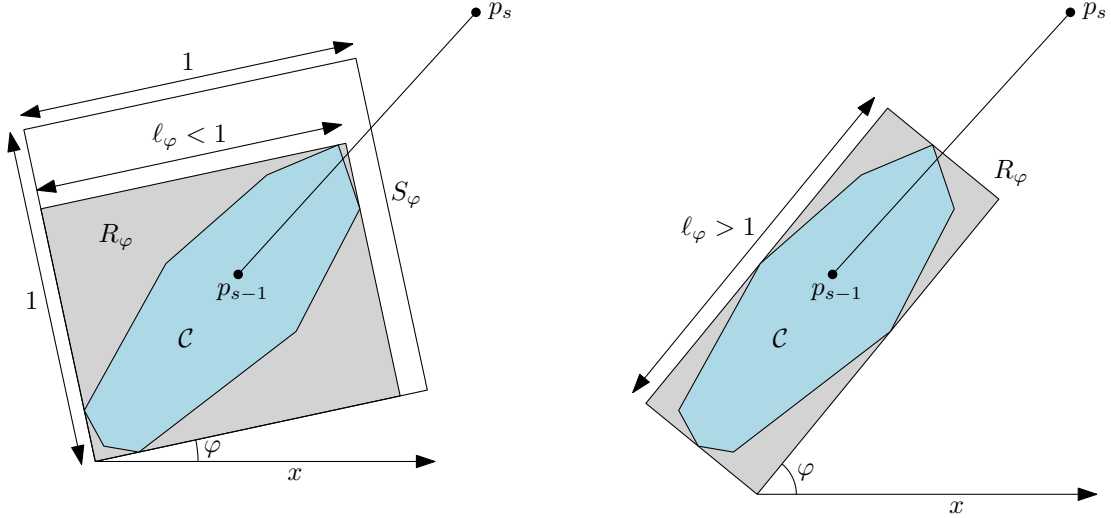}
\caption{Left: In some angle intervals, we have $\ell_\varphi\leq 1$, so there exists a unit square containing $\mathcal C$ and with the same orientation as $R_\varphi$.
We then keep track of the unit square $S_\varphi$ that covers most of $p_{s-1}p_s$, and this square shares a corner with $R_\varphi$, so there are only four candidates.
Right: In other intervals, we have $\ell_\varphi> 1$, so there is no unit square containing $\mathcal C$ with that orientation.}
\label{fig:rotatingsquares}
\end{figure}

\item In order to find out how much of the edge $p_{s-1}p_s$ we can cover with a unit square that also covers $\partial P[p_1,p_{s-1}]$, we can extend the algorithm described under point~\ref{item:1rotatedsquare}; see \Cref{fig:rotatingsquares}.
While running through the whole interval $\varphi\in [0,\pi/2)$, we keep track of the angle subintervals where $\ell_\varphi\leq 1$.
Whenever this is the case, we monitor how much of the edge $p_{s-1}p_s$ we can cover with a unit square aligned with $R_\varphi$.
As in the case of axis-aligned unit squares, the optimal such unit square $S_\varphi$ has a corner coincident with the corresponding corner of $R_\varphi$, so we just need to keep track of four unit squares.
We can therefore find the optimal square over all angles in $O(s)$ time.
\end{enumerate}
In total, the point $a_i$ can be found in $O(s\log s)$ time.

\pparagraph{Disk partitions.}
We describe the two phases for finding maximal intervals for disk partitions.

\begin{enumerate}
\item 
We perform an exponential search among the points $\{p_1,\ldots,p_m\}$ to find the index $s$.
At each step we can use an algorithm by Megiddo~\cite{DBLP:journals/siamcomp/Megiddo83a} to test if an interval $\partial P[p_1,p_j]$ is contained in a unit disk in $O(j)$ time.
The index $s$ can thus be found in $O(s\log s)$ time.

\item
The \emph{circular hull} of a set $M \subset \R^2$ is the intersection of all unit disks that contain $M$.
The circular hull is convex and if $M$ is finite, the boundary of the circular hull is a finite union of circular arcs of unit radius.
Let $\mathcal C$ be the circular hull of $\partial P[p_1,p_{s-1}]$, which is the same as the circular hull of the points $\{p_1,\ldots,p_{s-1}\}$.
Consider the unit disk $D$ that contains $\partial P[p_1,p_{s-1}]$ and a maximum prefix $p_{s-1}a_i$ of $p_{s-1}p_s$.
The following claim states that the disk $D$ falls under one of two possible cases that are depicted in \Cref{fig:disk-rotation}.

\begin{figure}
\centering
\includegraphics[page=23]{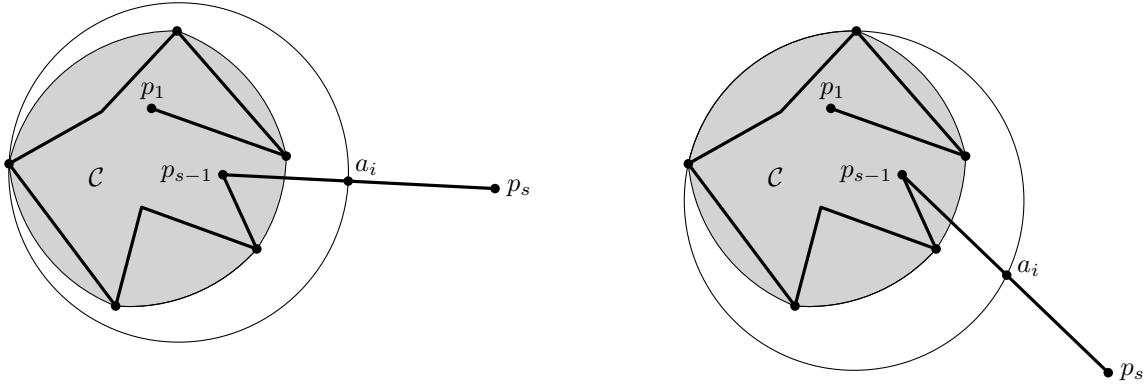}
\caption{The figure shows two cases of a unit disk covering a maximum prefix of $\partial P[p_1,p_s]$.
The circular hull of $\partial P[p_1,p_{s-1}]$ is in gray.}
\label{fig:disk-rotation}
\end{figure}

\begin{claim}
If there exists a disk $D'$ such that $\mathcal C\subset D'$ and $\partial D'$ contains a vertex $v$ of $\partial \mathcal C$ and the segment $va_i$ is a diameter of $D'$, then there is only one such disk $D'$ and we have $D=D'$.
Otherwise, $\partial D$ contains a unit radius arc from $\partial \mathcal C$.
\end{claim}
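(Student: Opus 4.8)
The plan is to pass to the space of disk centers and run a first--order (variational) optimality analysis at $D$. First I would observe that a unit disk $B(c,1)$ contains $\mathcal C$ if and only if it contains $\{p_1,\dots,p_{s-1}\}$ (a unit disk containing all the $p_j$ automatically contains their circular hull), i.e. if and only if $c\in\mathcal K:=\bigcap_{j<s}B(p_j,1)$, a compact convex set with $\mathcal K\subseteq B(p_{s-1},1)$. Placing $p_{s-1}$ at the origin, letting $\vec u$ be the unit vector toward $p_s$ and $\ell$ the line through $p_{s-1}$ in direction $\vec u$, the ray $\{t\vec u:t\ge 0\}$ leaves $B(c,1)$ at $t^{+}(c)\,\vec u$ with $t^{+}(c)=\vec u\cdot c+\sqrt{1-\dist(c,\ell)^2}$ (the radicand is nonnegative since $\dist(c,\ell)\le|c-p_{s-1}|\le 1$ for $c\in\mathcal K$). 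Thus $D$ corresponds to a center $c^{\ast}$ maximizing the smooth function $t^{+}$ over $\mathcal K$, with $a_i=t^{+}(c^{\ast})\,\vec u$. A direct computation gives $\nabla t^{+}(c)=(a_i-c)/\sqrt{1-\dist(c,\ell)^2}$, a positive multiple of the radius from $c$ to $a_i$, with $\vec u\cdot\nabla t^{+}(c)>0$; hence translating $c$ slightly in direction $\vec u$ strictly increases $t^{+}$, so if $\mathcal C$ were in $\operatorname{int}D$ we could keep $\mathcal C\subseteq D$ while improving, contradicting optimality. Therefore $\partial D\cap\partial\mathcal C\neq\emptyset$ and $a_i\in\partial D$. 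Moreover $\mathcal K$ and $\operatorname{int}B(a_i,1)$ are disjoint convex sets, so a separating line forces $c^{\ast}$ (hence $D$) to be unique --- this will supply the uniqueness asserted in the claim.

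Next I would examine $\partial D\cap\partial\mathcal C$. If it contains a sub-arc, that arc has unit radius (it lies on $\partial D$) and lies in $\partial\mathcal C$, which is exactly the ``otherwise'' alternative. So suppose it is a finite nonempty set $\{v_1,\dots,v_r\}$. Each $v_k$ is a \emph{vertex} of $\partial\mathcal C$: if $v_k$ lay in the relative interior of a boundary arc $\alpha\subseteq\partial D_\ell$ of $\mathcal C$ (with $D_\ell$ the corresponding defining unit disk), then $\alpha\subseteq\mathcal C\subseteq D$ would be a unit-radius arc inside $\overline D$ meeting $\partial D$ at an interior point of $\alpha$, forcing $D_\ell=D$ and a shared arc --- contradiction. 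Since the vertices of a circular hull are among the input points, each $v_k$ equals some $p_j$. Now first-order optimality of $c^{\ast}$ over $\mathcal K$ says $a_i-c^{\ast}$ lies in the normal cone of $\mathcal K$ at $c^{\ast}$, which is spanned by the outward unit normals $c^{\ast}-v_k$ of the active ball constraints. When $r=1$ this makes $a_i-c^{\ast}$ a positive multiple of $c^{\ast}-v_1$, and comparing lengths (all equal $1$) yields $a_i-c^{\ast}=c^{\ast}-v_1$, so $c^{\ast}=\tfrac{1}{2}(v_1+a_i)$ and $v_1a_i$ is a diameter of $D$; geometrically this is just the statement that, $\mathcal C$ touching $\partial D$ only at $v_1$, we could otherwise rotate $D$ about $v_1$ (keeping $\mathcal C$ inside) to push the exit point further along the ray.

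This already gives the claim. For the ``if'' direction: any $D'$ with $\mathcal C\subseteq D'$, $\partial D'\ni v$ a vertex of $\partial\mathcal C$, and $va_i$ a diameter has $a_i\in\partial D'$; since $p_{s-1}\in\mathcal C\subseteq D'$, the point $a_i$ is the far exit of the ray from $D'$, so $D'$ attains the maximal prefix and is an optimal disk, whence $D'=D$ by uniqueness of the optimum --- and in particular there is at most one such $D'$. Conversely, if no such diameter-disk $D'$ exists, then the optimal $D$ cannot be in the single-contact situation just analyzed, so $\partial D$ must share an arc with $\partial\mathcal C$, which is the ``otherwise'' conclusion. Combining: the two alternatives of the claim are exactly ``a diameter-disk exists (and then it is $D$, uniquely)'' versus ``it does not (and then $\partial D$ contains a unit-radius arc of $\partial\mathcal C$)''.

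I expect the main obstacle to be the finite-contact case with $r\ge 2$: there first-order optimality only places $a_i-c^{\ast}$ in a normal cone of dimension at least $2$, which does not by itself force any $v_k a_i$ to be a diameter. One must argue that this case collapses to the others --- namely, that if $\partial D$ is pinned at two or more vertices of $\partial\mathcal C$, none antipodal to $a_i$, then either $\partial D$ in fact contains an arc of $\partial\mathcal C$ (returning us to the ``otherwise'' alternative), or one can still rotate $D$ about a single contact vertex to increase $t^{+}$, contradicting optimality. Carrying out this elimination carefully --- tracking precisely which arcs of $\partial\mathcal C$ a unit disk pinned at two of its vertices can and cannot avoid --- is where the real work lies; the rest of the argument is routine.
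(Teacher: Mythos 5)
Your variational setup is sound and your single--contact analysis is correct, but you have not proved the claim: you explicitly leave open the case where $\partial D\cap\partial\mathcal C$ is a finite set of $r\ge 2$ vertices, none antipodal to $a_i$, and you acknowledge that eliminating this case ``is where the real work lies.'' That case is not a routine loose end --- it is exactly the content of the dichotomy. The first-order condition at a two-contact optimum only places $a_i-c^{\ast}$ in the cone spanned by $c^{\ast}-v_1$ and $c^{\ast}-v_2$, and such configurations are locally consistent with neither $v_1a_i$ nor $v_2a_i$ being a diameter; so without a further argument you cannot conclude that either alternative of the claim holds. The paper's proof closes this by a rolling argument: if $va_i$ is not a diameter, rotating $D$ about the contact vertex $v$ in one of the two directions strictly increases the covered prefix to first order, so optimality forces that rotation to be blocked immediately, and being blocked means the unit-radius arc of $\partial\mathcal C$ adjacent to $v$ in that direction already lies on $\partial D$ --- which is the ``otherwise'' alternative.

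The cleanest way to finish your version is via the duality between $\mathcal C$ and your center region $\mathcal K=\bigcap_j B(p_j,1)$. Since $c^{\ast}\in\partial\mathcal K$, either $c^{\ast}$ lies in the relative interior of an arc of $\partial\mathcal K$ centered at a single vertex $v$ (then disjointness of $\mathcal K$ from $\operatorname{int}B(a_i,1)$ forces the circles $\partial B(v,1)$ and $\partial B(a_i,1)$ to be tangent at $c^{\ast}$, i.e.\ $va_i$ is a diameter --- your $r=1$ case), or $c^{\ast}$ is a vertex of $\mathcal K$ where arcs centered at $v_1$ and $v_2$ meet. In the latter case $v_1$ and $v_2$ are adjacent vertices of $\partial\mathcal C$ and the arc of $\partial\mathcal C$ between them is precisely an arc of the unit circle centered at that vertex of $\mathcal K$, i.e.\ of $\partial B(c^{\ast},1)=\partial D$. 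So the multi-contact case does not need to be ``eliminated''; it \emph{is} the shared-arc alternative. Until you supply this step (or the equivalent rolling argument), the proof is incomplete.
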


\begin{proof}
Clearly a disk $D'$ with the described properties is the unique disk $D$ that covers most of $p_{s-1}p_s$, as no other disk can cover more of $p_{s-1}p_s$ while also containing $v$.

If such a disk $D'$ does not exist, then the optimal disk $D$ must anyway contain a vertex $v$ of $\partial \mathcal C$, since otherwise it could be moved to cover more of $p_{s-1}p_s$.
Since $va_i$ is not a diameter of $D$, rotating $D$ either counterclockwise or clockwise while keeping $v$ a fixed point on $\partial D$ will result in a disk covering more of $p_{s-1}p_s$.
In the former case (rolling counterclockwise) it must therefore be the case that the unit radius arc after $v$ in counterclockwise order is contained in $\partial D$, while in the latter, the arc before $v$ is in $\partial D$.
\end{proof}

The claim suggests an algorithm to find the optimal disk $D$.
We first compute the circular hull $\mathcal C$ using an algorithm by Edelsbrunner, Kirkpatrick and Seidel~\cite{edelsbrunner1983shape}, which runs in time $O(s \log s)$.
We let $D'$ be a disk whose boundary contains a unit radius arc on $\partial\mathcal C$ and can then informally ``roll'' $D'$ around $\mathcal C$, keeping track of when $D'$ covers most of $p_{s-1}p_s$.
To be precise, let $A_1,\ldots,A_m$ be the unit radius arcs on $\partial\mathcal C$ in counterclockwise order, and let $v_1,\ldots,v_m$ be the vertices such that $v_i$ is the last point in $A_i$ in counterclockwise direction.
We first define $D'$ to be the disk such that $A_i\subset\partial D'$.
We then rotate $D'$ in counterclockwise direction while keeping $v_1$ on the boundary until we obtain $A_2\subset \partial D'$.
We then rotate around $v_2$ instead and continue in this way until we get back to $A_1$.
While rotating around a vertex $v_i$, we check if we eventually cover a point $q\in p_{s-1}p_s$ such that $v_iq$ is a diameter of $D'$, in which case we know that we found the optimal disk and can stop.
Otherwise, we keep track of the disk containing an arc $A_i$ that covers most of $p_{s-1}p_s$, which must be the optimal disk.
We can therefore find the optimal disk in $O(s)$ time once we know the circular hull $\mathcal C$.
\end{enumerate}

In total, the point $a_i$ can be found in $O(s\log s)$ time.

\pparagraph{Straight diameter partitions.}
We describe the two phases for finding the maximal intervals for the straight diameter partitions. 
\begin{enumerate}
\item
We use exponential search among the points $\{p_1,\ldots,p_m\}$ to find the index $s$.
In order to compute the straight diameter of an interval $\partial P[p_1,p_j]$, we compute the convex hull $\mathcal C$ of this interval, which can be done in $O(j)$ time, just as we did in the first phase for rotated square partitions.
We then compute the straight diameter of $\mathcal C$ using the technique of the rotating calipers~\cite[Theorem 4.19]{Preparata1985} in time proportional to the complexity of $\mathcal C$.

\item
In order to compute the maximal prefix of the edge $p_{s-1}p_s$ that can be covered by a piece of bounded diameter, we first construct the farthest point Voronoi diagram $\mathcal V$ of the points $\{p_1,\ldots,p_{s-1}\}$, which can be done in $O(s\log s)$ time~\cite[pp.~252--253]{Preparata1985}; see \Cref{fig:straightdiameter}.
The diagram $\mathcal V$ is a subdivision of the plane into convex regions $V_1,\ldots,V_{s-1}$ such that $V_j$ consists of all points $q$ where $p_j$ is the point among $\{p_1,\ldots,p_{s-1}\}$ with the largest distance to $q$.
We now traverse the segment $p_{s-1}p_s$ and keep track of the region $V_j$ in which we are.
When the distance to the farthest point $p_j$ reaches $1$, we stop and have found the point $a_i$.
Since $\mathcal V$ has complexity $O(s)$ and the regions are convex, this traversal can be done in $O(s)$ time.
\end{enumerate}
In total, the point $a_i$ can be found in $O(s\log s)$ time.

\begin{figure}
\centering
\includegraphics[page=24]{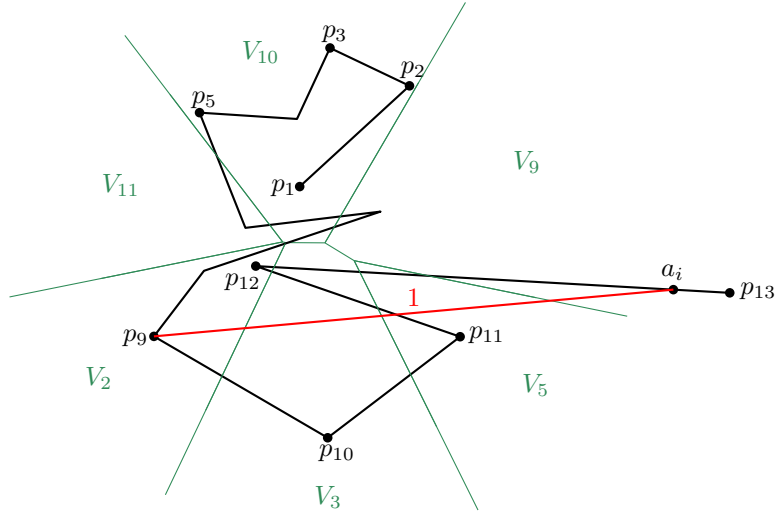}
\caption{In this example, we have $s=13$.
The farthest point Voronoi diagram of the vertices $\{p_1,\ldots,p_{12}\}$ is shown.
In order to find the maximal prefix of the interval $\partial P[p_1,p_{13}]$ of diameter at most $1$, we traverse the edge $p_{12}p_{13}$ and thus pass through the regions $V_2,V_3,V_5,V_9$ and finally get to the point $a_i$ where the distance to the farthest point $p_9$ is $1$.}
\label{fig:straightdiameter}
\end{figure}

\pparagraph{Geodesic diameter partitions.}
We describe the two phases for finding the maximal intervals for the geodesic diameter partitions.
\begin{enumerate}
\item
We use exponential search among the points $\{p_1,\ldots,p_m\}$ to find the index $s$.
In order to test if there exists a piece of bounded geodesic diameter covering a given interval $\partial P[p_1,p_j]$, we first construct the weakly simple polygon $Q_j$ bounded by the union of $\partial P[p_1,p_j]$ and the shortest path $\pi(p_1,p_j)$ from $p_1$ to $p_j$.
To do so we use an algorithm by Lee and Preparata~\cite{DBLP:journals/networks/LeeP84} to compute $\pi(p_1,p_j)$ in $O(n)$ time, using Chazelle's algorithm~\cite{chazelle1991triangulating} for triangulation in $O(n)$ time as preprocessing. We then compute the geodesic diameter of $Q_j$ using an algorithm by Hershberger and Suri~\cite{DBLP:journals/siamcomp/HershbergerS97} in $O(n)$ time (note that $Q_j$ might have complexity $\Omega(n)$ since many corners of $P$ that are not on the interval $\partial P[p_1,p_j]$ can be corners of $Q_j$).
We therefore use $O(n\log s)$ time to find the index $s$.

\item
This phase is similar to the second phase for straight diameter partitions, using a geodesic farthest point Voronoi diagram instead of a usual one:
In order to find the maximal prefix of the edge $p_{s-1}p_s$ that can be covered by a piece that also covers the interval $\partial P[p_1,p_{s-1}]$, we compute the geodesic farthest point Voronoi diagram $\mathcal V$ of the corners $p_1,\ldots,p_{s-1}$ in $P$.
This can be done in $O(n+s\log s)$ time using an algorithm by Wang~\cite{wang2022optimal}.
We then traverse the edge $p_{s-1}p_s$ and keep track of the region of $\mathcal V$ in which we are, and we stop when we reach the point $a_i$ where the geodesic distance to the farthest point will exceed $1$ by any further movement.
This traversal takes $O(n)$ time.
\end{enumerate}
In total, the point $a_i$ can be found in $O(n\log n)$ time.

\begin{figure}
\centering
\includegraphics[page=28]{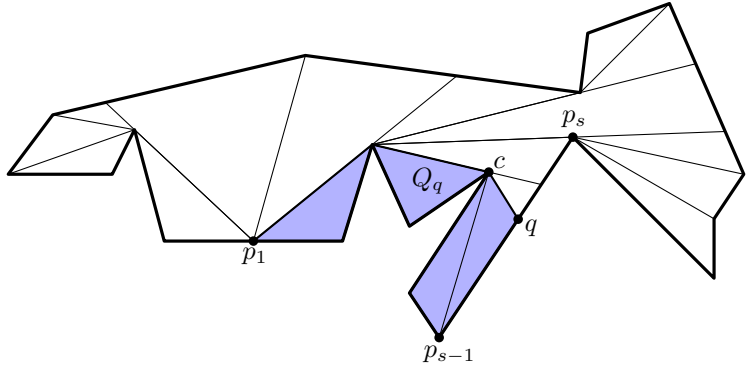}
\caption{We traverse the boundary $\partial P$ with the point $q$ and stop when the perimeter of $Q_q$ is $1$.
The shortest path tree is shown.
}
\label{fig:greedyperimeter}
\end{figure}

\pparagraph{Perimeter partitions.}
When constructing perimeter partitions, we can do both phases at once, as follows.
We first compute the \emph{shortest path tree} $T$ from the point $p_1$.
The tree $T$ partitions $P$ into regions such that for all points $q$ in one region, the shortest path $\pi(p_1,q)$ from $p_1$ to $q$ passes through the same sequence of corners of $P$.
The tree $T$ can be constructed in $O(n)$ time using an algorithm by Guibas, Hershberger, Leven, Sharir, and Tarjan~\cite{DBLP:journals/algorithmica/GuibasHLST87} and using Chazelle's algorithm~\cite{chazelle1991triangulating} for triangulation in $O(n)$ time as preprocessing.
We then traverse the interval $\partial P[p_1,p_m]$ with a moving point $q$; see \Cref{fig:greedyperimeter}.
We keep track of the perimeter of the weakly simple polygon $Q_q$ bounded by the union of the interval $\partial P[p_1,q]$ and the shortest path $\pi(p_1,q)$.
Note that this perimeter is monotonically increasing since the length of $\partial P[p_1,q]$ increases with unit speed (if $q$ moves with unit speed) while the length of $\pi(p_1,q)$ can decrease, but not with more than unit speed.
We stop when we reach a point $q=a_i$ where the perimeter of $Q_q$ is $1$ and any further movement of $q$ would make the perimeter exceed $1$.
We use the shortest path tree to keep track of the length of the path $\pi(p_1,q)$; the region containing $q$ defines the last corner $c$ of $P$ on the path $\pi(p_1,q)$, and the length of $\pi(p_1,q)$ can thus be computed as the sum of the precomputed length of the path $\pi(p_1,c)$ and the length of the segment $cq$.
In total, the point $a_i$ can be found in $O(n)$ time.

\subsection{Constructing boundary pieces from the intervals}\label{sec:constructingpieces}
Define the interval $I_i=\partial P[a_{i-1}, a_i]$ and let $\mathcal I$ be the set of the intervals $I_i$.
For the disk, square and straight diameter partitions, we define $C_i$ to be the convex hull of $I_i$ and $\mathcal C$ to be the set of these polygons $C_i$.
We then iterate through the intervals $I_i$ and construct a boundary piece $Q_i$ as follows; see Figure~\ref{fig:boundary-piece-convex-hull}.
Consider the set
\[
\Sigma_i =\left( \left(C_i\cap P\right) \setminus \bigcup_{j=1}^{i-1} Q_j \right) \cup I_i.
\]
One of the connected components $R$ of $\Sigma_i$ includes the interval $I_i$.
We define our piece as the closure of this region, i.e., $Q_i=\overline R$.
In other words, we create the piece $Q_i$ by ``blowing up'' the interval $I_i$ towards the interior of $P$ and stop when we hit the boundary of $P$, a previously constructed piece, or the boundary of $C_i$.
Clearly, since the interval $I_i$ is of bounded size, so is $Q_i$.
Note that $Q_i$ is a weakly simple but not necessarily a simple polygon:
For instance, $Q_i$ can be contained in the boundary $\partial P$, which happens when $I_i$ is a concave interval on $\partial P$.

\begin{figure}
\centering
\includegraphics[page=14]{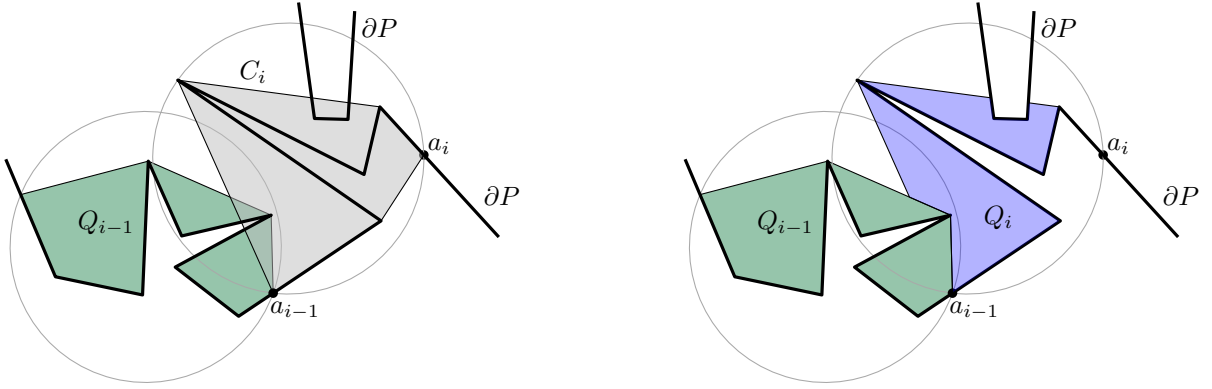}
\caption{Constructing the piece $Q_i$ in a disk partition from the convex hull $C_i$.
}
\label{fig:boundary-piece-convex-hull}
\end{figure}

This construction has the consequence that all intervals on the boundaries of the pieces that are neither shared with $\partial P$ nor with another piece, are contained in the boundaries of the at most $2\,\opt$ convex polygons $\mathcal C$ of bounded size, by~\Cref{lemma:boundarybound}.
As we will see in~\Cref{sec:DSSDinterior}, this makes it possible to give a bound on the number of interior pieces we are going to construct.

To compute the pieces, we consider the overlay of $P$ with $\mathcal C$.
We then use a simple flood fill algorithm from each interval $I_i$, marking the faces as belonging to the piece $Q_i$ until we reach the boundary of $P$, the boundary of a previously constructed piece or the boundary $\partial C_i$ of the convex hull of $I_i$.

\begin{figure}
\centering
\includegraphics[page=4]{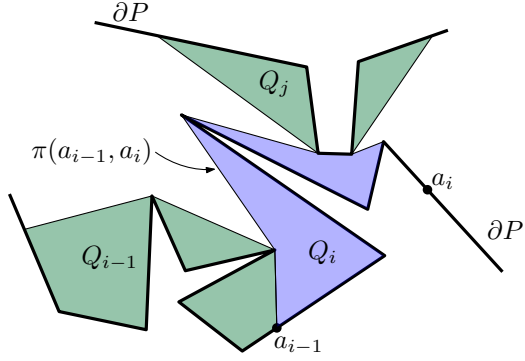}
\caption{Constructing the piece $Q_i$ in a geodesic diameter partition.
}
\label{fig:boundary-piece-geodesic}
\end{figure}

For the geodesic diameter and the perimeter variant, we cannot use the same approach of ``blowing up'' the pieces until we reach the convex hull, since this can cause the geodesic diameter or perimeter to become arbitrarily large.
Instead, we define the boundary piece $Q_i$ as the polygon enclosed by the union of the interval $I_i$ and the shortest path $\pi(a_{i-1},a_i)$ in $P$ between the endpoints of $I_i$; see \Cref{fig:boundary-piece-geodesic}.
We have chosen the interval $I_i$ specifically such that the resulting piece $Q_i$ is of bounded size.
Note that using this construction, our resulting pieces $\mathcal Q_\partial$ are indeed pairwise interior-disjoint, since two shortest paths connecting non-overlapping intervals on the boundary of a simple polygon cannot cross.

We say that an interval $I_i$ and the corresponding boundary piece $Q_i$ are \emph{trivial} whenever $I_i$ does not contain a corner of $P$.
In that case, $I_i$ is just a line segment and $I_i=Q_i$.
In the case of square, disk and straight diameter partitions, we likewise say that the convex hull $C_i\in\mathcal C$ is \emph{trivial} if the corresponding interval $I_i$ is trivial.

\begin{lemma}\label{lemma:boundaryconstruction}
The boundary pieces can be constructed within the time complexities stated in \Cref{thm:main-2}.
\end{lemma}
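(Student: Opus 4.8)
The plan is to bound separately the two phases of the construction: the repeated application of the single-interval procedure of \Cref{sec:maxintervals} that produces the cut points $a_0,\ldots,a_{k-1}$, and the procedure of \Cref{sec:constructingpieces} that turns the intervals into the pieces $\CQ_\partial$. The subtlety is that $k=|\CQ_\partial|$ can be superpolynomial in $n$ — by \Cref{lemma:boundarybound} it is only bounded by $2\,\opt-1$, and $\opt$ is already superpolynomial when $P$ is a big square — so we cannot afford even $\Theta(1)$ time per interval if we want the estimation bounds of \Cref{table:results}, which are polynomial in $n$. The key observation is the dichotomy between \emph{nontrivial} intervals, which contain a corner of $P$ in their relative interior, and \emph{trivial} ones, which are segments inside a single edge of $P$. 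Since distinct intervals of $\CQ_\partial$ have disjoint relative interiors, there are at most $n$ nontrivial intervals, and if $s_i$ denotes the number of corners of $P$ examined while producing a nontrivial $I_i$, then $\sum_i s_i=O(n)$, because $I_i$ contains $s_i-O(1)$ corners in its interior.

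For the first phase I would run the procedure of \Cref{sec:maxintervals} verbatim on the nontrivial intervals: summing the per-interval bounds and using $\sum_i s_i=O(n)$ gives $O(n)$ for aligned squares, $O(n\log n)$ for rotated squares, disks and straight diameter, $O(n^2\log n)$ for geodesic diameter, and $O(n^2)$ for perimeter, matching the estimation column of \Cref{table:results}. The trivial intervals come in maximal runs along a single edge $e$ of $P$; because each size constraint is invariant under sliding along $e$ (and, for the perimeter and geodesic variants, the relevant shortest paths are simply sub-segments of $e$), consecutive cut points inside $e$ are equally spaced, so the entire run on $e$ — however many pieces it contains — is determined by one closed-form computation taking $O(1)$ time beyond the $O(1)$ needed to record each cut point. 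Hence the $O(\opt)$ trivial intervals add only $O(n)$ to the estimation time and $O(\opt)$ to the construction time.

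For the second phase, consider first aligned square, rotated square, disk and straight diameter partitions. The convex (resp.\ circular) hulls $C_i$ of the nontrivial intervals are obtained in $O(s_i)$ (resp.\ $O(s_i\log s_i)$) time each — they are in fact already produced by the exponential search — for a total of $O(n)$ or $O(n\log n)$; trivial intervals need no hull since there $Q_i=I_i$. I would then build the arrangement $\mathcal A$ of $\partial P$ together with the $O(n)$ edges of the nontrivial hulls. Because each $C_i$ is convex, every edge of $P$ meets $\partial C_i$ in at most two points, and any two hulls cross in a number of points linear in their total complexity, so $\mathcal A$ has $O(n^2)$ vertices and can be built in $O(n^2)$ time using an optimal algorithm for arranging $N$ segments with $I$ crossings in $O(N\log N+I)$ time. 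Running the flood fills of \Cref{sec:constructingpieces} inside $\mathcal A$ — processing the intervals in their cyclic order along $\partial P$, walking along $\partial P$ inside $\mathcal A$ to find each starting face, and treating faces already claimed by an earlier piece as walls — then costs $O(|\mathcal A|)=O(n^2)$, plus $O(\opt)$ to output the trivial pieces, which is within the $O(n^2+\opt\log n)$ construction bound. For the geodesic diameter and perimeter variants, each nontrivial $Q_i$ is the polygon bounded by $I_i$ and the shortest path $\pi(a_{i-1},a_i)$ in $P$; after triangulating $P$ once in $O(n)$ time, each such path and the resulting polygon are computed in $O(n)$ time, for $O(n^2)$ in total (absorbed by the $O(n^2\log n)$ estimation bound in the geodesic case), with the trivial pieces again costing $O(\opt)$.

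The step I expect to need the most care is the arrangement-and-flood-fill analysis: one must check that discarding the $O(\opt)$ trivial hulls (which are harmless segments equal to their intervals) keeps the number of segments at $O(n)$, that the crossing count is genuinely $O(n^2)$, and that an output-sensitive arrangement algorithm is used so that no spurious logarithmic factor is incurred — this is what separates the $O(n^2)$ figure in \Cref{table:results} from the $O(n^2\log n)$ that a plain Bentley--Ottmann sweep would yield. Everything else is a matter of adding up the per-interval estimates already established in \Cref{sec:maxintervals,sec:constructingpieces}, together with the observation that at most $n$ of those intervals are nontrivial.
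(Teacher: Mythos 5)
Your proposal is correct and follows essentially the same route as the paper: split the intervals into trivial and non-trivial ones, charge the non-trivial searches to the $O(n)$ corners they cover, handle trivial runs in $O(1)$ plus $O(\opt)$ output, build the $O(n^2)$-complexity overlay of $\partial P$ with the $O(n)$ edges of the non-trivial hulls and flood-fill it, and construct the geodesic/perimeter pieces directly from the shortest paths in $O(n)$ each. Your one refinement — insisting on an output-sensitive $O(N\log N+I)$ segment-arrangement algorithm rather than a plain sweep so that the overlay genuinely costs $O(n^2)$ — addresses a detail the paper passes over by citing a ``standard overlay algorithm,'' and is a legitimate tightening rather than a different argument.
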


\begin{proof}
Let us first consider square, disk, and straight diameter partitions.
Recall that when constructing a new piece, we are given a point $a_{i-1}$ on $\partial P$ and want to find the point $a_i$ such that the interval $I_i=\partial P[a_{i-1},a_i]$ is the maximum interval starting at $a_{i-1}$ that can be covered by a single piece.
If $I_i$ is trivial, we use $O(1)$ time to find $a_i$.
We thus use $O(\opt)$ time on intervals of this type in total.
Otherwise, we use $O(s)$ time for aligned square partitions or $O(s\log s)$ time for rotated square, disk and straight diameter partitions, where $s$ is the number of corners of $P$ that were not already covered by other pieces.
In total, we therefore use $O(n+\opt)$ and $O(n\log n+\opt)$ time, respectively, to find the intervals.

We now consider the time used to blow up the intervals.
Here we only have work to do for the non-trivial intervals.
Let $\mathcal C'\subset\mathcal C$ denote the set of non-trivial polygons.
We need to compute the overlay $\mathcal L$ of $P$ and $\mathcal C'$, and we want to show that $\mathcal L$ has complexity $O(n^2)$.
Constructing the actual pieces using a flood fill algorithm then also takes $O(n^2)$ time.

Since each corner of $P$ is in at most two non-trivial intervals in $\mathcal I$, the cardinality of $\mathcal C'$ is at most $2n$.
Furthermore, each polygon $C_i$ has at most two corners that are not corners of $P$, namely the endpoints of the interval $I_i$.
We get that the total number of corners of polygons in $\mathcal C'$ is $O(n)$.
We can then use a standard overlay algorithm to construct $\mathcal L$ in $O(n^2)$ time~\cite[Ch.~2]{de1997computational}.

Let us now consider the time used to find the intervals $I_i$ for geodesic diameter and perimeter partitions.
We can again handle each trivial interval in $O(1)$ time, so all trivial intervals take $O(\opt)$ time in total.
Each non-trivial interval takes $O(n)$ time for perimeter partitions and $O(n\log n)$ time for geodesic diameter partitions, so the total time for all non-trivial intervals is $O(n^2)$ and $O(n^2 \log n)$, respectively.
We construct the actual pieces while finding the intervals, so the total time used is $O(n^2\log n+\opt)$ and $O(n^2+\opt)$, respectively.
\end{proof}

\section{Computing the interior pieces}\label{sec:interiorpieces}
After constructing the boundary pieces as described in~\Cref{sec:boundary-pieces}, we place a regular square grid on top of $P$, as shown in \Cref{fig:interiorpiecetypes}.
We choose the size of the grid according to the type of partition we are constructing, which is to be specified in later sections for the particular types of partitions.
We define $\mathcal S$ to be the set of the squares that intersect~$P$.
We create interior pieces from the regions in the squares that are not contained in boundary pieces.

\begin{figure}
\centering
\includegraphics[page=8]{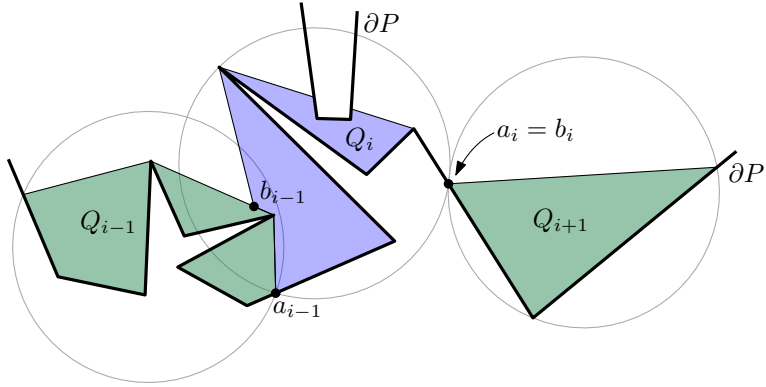}
\caption{Three consecutive boundary pieces $Q_{i-1},Q_i,Q_{i+1}$ of a disk partition.
}
\label{fig:pointsab}
\end{figure}

In order to bound the number of interior pieces, we need to introduce the notion of a \emph{free interval}, as follows.
Let $\CQ_\partial=\{Q_1,\ldots,Q_k\}$ be the boundary pieces in the counterclockwise order in which they were constructed by the greedy algorithm in \Cref{sec:boundary-pieces}.
For each $i\in\{1,\ldots,k\}$, the boundary of $Q_i$ and $Q_{i+1}$ share an interval $\partial Q_i[a_i,b_i]=\partial Q_{i+1}[b_i,a_i]$, where indices are taken modulo $k$; see~\Cref{fig:pointsab}.
Here, $a_i$ is a point on $\partial P$, while $b_i$ need not be.
Each piece $Q_i$ was chosen in a greedy way so as to maximize the interval $\partial Q_i[a_{i-1},a_i]\subset \partial P$ that $\partial Q_i$ covers of $\partial P$.
In other words, the exterior of $P$ is to the right of this interval (when traversing $\partial Q_i$ in counterclockwise direction), while the interval $\partial Q_i(b_i,b_{i-1})$ has the interior of $P$ to the right (except where it occasionally meets the interval $\partial P(a_i,a_{i-1})$ that constitute the rest of $\partial P$ that the piece $Q_i$ could not cover).
We define a \emph{free interval} of $Q_i$ to be a maximal interval in $\partial Q_i[b_i,b_{i-1}]$ that is interior-disjoint from all other boundary pieces.

\begin{lemma}\label{lemma:ibi}
There are at most $6\,\opt-9$ free intervals.
\end{lemma}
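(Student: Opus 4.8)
The plan is to set up the planar subdivision of $P$ cut out by the boundary pieces and run Euler's formula on it.

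Let $\CQ_\partial=\{Q_1,\dots,Q_k\}$ be the greedy boundary pieces and $U=\bigcup_i Q_i$. After an infinitesimal perturbation I would assume the $Q_i$ are simple, pairwise interior-disjoint, and in general position, so that the overlay $G$ of $\partial P$ and the $\partial Q_i$ is a proper plane graph. Its bounded faces are exactly the $k$ pieces together with the connected components $F_1,\dots,F_r$ of $P\setminus U$, which I will call the \emph{uncovered regions}; its unbounded face is $P^c$. Because the boundary pieces cover $\partial P$, no $F_j$ meets $\partial P$, so every edge of $\partial F_j$ lies on some $\partial Q_i$ and is disjoint from the interiors of the other pieces, i.e.\ it is an interior boundary interval; conversely every interior boundary interval is an edge of $G$ separating some $Q_i$ from some $F_j$. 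Thus the quantity we must bound, say $y$, is the number of $Q$--$F$ edges of $G$.

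For the counting: in general position every vertex of $G$ has degree $3$ and exactly one of the types $(Q,Q,P^c)$, $(Q,Q,F)$, $(Q,Q,Q)$, according to the three incident faces; the $(Q,Q,P^c)$-vertices are precisely the $k$ points $a_0,\dots,a_{k-1}$ where the owning piece of $\partial P$ changes. Walking around each $\partial F_j$, interior boundary intervals and $(Q,Q,F)$-vertices alternate, so there are exactly $y$ vertices of type $(Q,Q,F)$. Writing $\gamma$ for the number of $(Q,Q,Q)$-vertices we get $|V|=k+y+\gamma$; since $G$ is $3$-regular, $2|E|=3|V|$; and $|F|=k+r+1$. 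Substituting into $|V|-|E|+|F|=2$ gives the identity $y+\gamma=k+2r-2$, so in particular $y\le k+2r-2$.

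It remains to bound the number $r$ of uncovered regions. Generically each $F_j$ is a face with at least three sides, all of which are interior boundary intervals, which gives $3r\le y$; combined with $y\le k+2r-2$ this yields $y\le 3k-6$, and \cref{lemma:boundarybound} ($k\le 2\,\opt-1$) then gives $y\le 6\,\opt-9$. The step that needs genuine care — and the main obstacle — is exactly this bound on $r$. An uncovered region can be a two-sided ``lens'' bounded by only two interior boundary intervals (this already happens for $k=2$, when the two boundary pieces fail to meet across a lens-shaped gap), so $3r\le y$ is not literally valid, and the weakly simple pieces resist a blanket general-position assumption. One must therefore treat the small uncovered regions separately, for instance by charging each uncovered region (or each relevant $(Q,Q,F)$-vertex) against the optimal partition $\CQ^{*}$, which must cover every $F_j$, so as to obtain a bound such as $r\le\opt-1$ — which is already enough, since then $y\le k+2(\opt-1)-2\le 4\,\opt-5\le 6\,\opt-9$ whenever $\opt\ge 2$. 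In short, the Euler-formula identity $y+\gamma=k+2r-2$ is the engine of the proof; establishing the bound on $r$ while correctly handling the degenerate weakly simple pieces is where the real work lies.
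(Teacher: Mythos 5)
Your Euler-formula identity $y+\gamma=k+2r-2$ is correct (it checks out on the small configurations: a triangular gap among three pieces, a lens between two, three pieces meeting at a point) and it is a genuinely different engine from the paper's argument. But the proof does not close, and the gap is exactly the one you flag yourself: bounding the number $r$ of uncovered regions. The fallback you propose --- charging each uncovered region to a piece of the optimal partition $\CQ^{*}$ that covers it, to get $r\le\opt-1$ --- is not valid: an optimal piece is only required to be connected inside $P$, not inside $P\setminus\bigcup\CQ_\partial$, so a single optimal piece can pass over boundary-piece territory and meet arbitrarily many uncovered regions. The injection you need simply is not there. And as you correctly observe, $3r\le y$ fails for two-sided lenses; note that even splitting $r=r_2+r_{\ge3}$ and combining $2r_2+3r_{\ge3}\le y$ with your identity only yields $r_{\ge3}\le k-2$, leaving the number of lenses completely uncontrolled. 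So as written the argument establishes an identity but no bound.

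The paper closes this hole by a different count that never mentions uncovered regions: it forms the graph $G$ on vertex set $\CQ_\partial$ with an edge whenever two pieces' boundaries meet, observes that $G$ is outerplanar with the Hamiltonian cycle $Q_1Q_2\cdots Q_k$ bounding its outer face, and hence has at most $k-3$ edges beyond that cycle; with no extra edges there are exactly $k$ interior boundary intervals, and each extra edge creates at most two more, giving $3k-6\le 6\,\opt-9$ via \cref{lemma:boundarybound}. In your language, every lens (and more generally every extra side of an uncovered region beyond the baseline) is witnessed by an extra contact between two boundary pieces, and outerplanarity is what bounds the number of such contacts. That combinatorial input is precisely what your Euler-formula computation is missing; if you want to salvage your route, you need to prove a bound of this type on the pairwise contacts rather than appeal to $\CQ^{*}$.
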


\begin{figure}
\centering
\includegraphics[page=9]{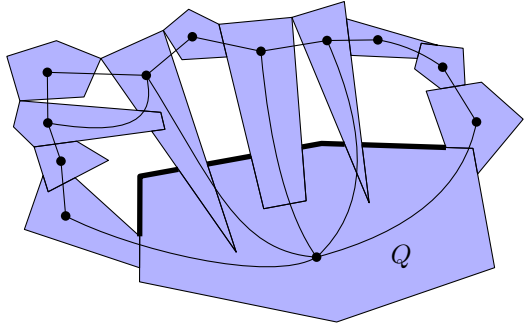}
\caption{The graph $G$ from the proof of \Cref{lemma:ibi}.
The polygons are the boundary pieces.
There are four free intervals on the boundary of the piece $Q$, shown in fat.
The figure is for illustration only; the pieces produced by our algorithms would not look like these.
}
\label{fig:graphG}
\end{figure}

\begin{proof}
Consider the undirected graph $G$ with vertices $\CQ_\partial$ where we add an edge between two pieces $Q_i$ and $Q_j$ if $\partial Q_i\cap \partial Q_j\neq\emptyset$; see \Cref{fig:graphG}.
Note that $G$ is an outerplanar graph where the outer face is bounded by a Hamiltonian cycle $C$ consisting of the edges $Q_iQ_{i+1}$ corresponding to the boundary pieces in cyclic order.
If there were no other edges in $G$, there would be exactly $\vert\CQ_\partial\vert$ free intervals, namely the intervals $\partial Q_i[b_i,b_{i-1}]$.
Each extra edge of $G$ between pieces $Q_i$ and $Q_j$ can cause the creation of two extra free intervals, one on each of $\partial Q_i$ and $\partial Q_j$.
Note that $G$ has at most $\vert \CQ_\partial\vert-3$ of these extra edges, since that number of edges will cause the cycle $C$ to be triangulated.
Hence, there are at most $\vert \CQ_\partial\vert+2(\vert \CQ_\partial\vert-3)=3\vert \CQ_\partial\vert-6$ free intervals.
Recall that $\vert \CQ_\partial\vert\leq 2\,\opt-1$ due to~\Cref{lemma:boundarybound}.
We therefore get that there are at most $6\,\opt-9$ free intervals.
\end{proof}

\subsection{Disk, square and straight diameter partitions}\label{sec:DSSDinterior}
We choose the edge length of the grid squares $\mathcal S$ to be maximum so that each square satisfies the size constraint, i.e., the edge length of each square should be $\sqrt 2$, $1/\sqrt 2$ and $1$ for disk, straight diameter and square partitions, respectively.

Consider a square $S\in\mathcal S$.
We define a \emph{field} to be the closure of a connected component of $(S\cap P)\setminus\bigcup\mathcal Q_\partial$.
We use each field as an interior piece.
That is, if no boundary piece overlaps the square $S$, then $S$ itself is used as a piece. Otherwise, $S$ may be broken into several interior pieces by the intersecting boundary pieces.
For disk, square and straight diameter partitions, we can now bound the number of interior pieces as stated by the following lemma.

\begin{lemma}\label{lemma:DSDPinterior}
The number of interior pieces are bounded as follows.
\begin{itemize}
\item Aligned square partitions: $11\,\opt$.

\item Rotated square partitions: $19\,\opt$.

\item Disk and straight diameter partitions: $(18+\pi/2)\,\opt$.
\end{itemize}
\end{lemma}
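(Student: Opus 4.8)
The plan is to classify the grid squares in $\mathcal S$ according to whether they meet a boundary piece, bound the two resulting contributions to the number of interior pieces separately, and finally plug in the edge length $\gamma$ chosen for each partition type.

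Call a square $S\in\mathcal S$ \emph{free} if it is disjoint from every piece of $\CQ_\partial$, and let $\mathcal S_1$ be the set of free squares and $\mathcal S_2=\mathcal S\setminus\mathcal S_1$. Since the intervals $I_i$ partition $\partial P$ and $I_i\subseteq Q_i$, the boundary pieces cover all of $\partial P$; hence a free square meets $P$ but not $\partial P$, so, being connected, it lies entirely in the interior of $P$. Consequently $(S\cap P)\setminus\bigcup\CQ_\partial=S$ for a free square, so each free square is exactly one interior piece. I would then bound $|\mathcal S_1|$ by an area argument: the free squares are pairwise interior-disjoint and contained in $P$, so $|\mathcal S_1|\cdot\gamma^2\le\area P$; and in an optimal partition each piece is contained in a unit disk, has straight diameter at most $1$, or is contained in a unit square, so each optimal piece has area at most $\pi$, at most $\pi/4$ (by the isodiametric inequality), or at most $1$, whence $\area P\le\pi\,\opt$, $(\pi/4)\,\opt$, or $\opt$, respectively. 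Since $\gamma^2$ equals $2$, $1/2$, or $1$ in these three cases, this gives $|\mathcal S_1|\le(\pi/2)\,\opt$ for disk and straight diameter partitions and $|\mathcal S_1|\le\opt$ for aligned and rotated square partitions — precisely the additive term appearing in the statement.

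It remains to bound the number of interior pieces lying inside the squares of $\mathcal S_2$, and this is where the real work lies. First, a trivial boundary piece is a line segment, and removing a segment from a square cannot disconnect it, so only non-trivial boundary pieces matter; by \Cref{lemma:boundarybound} there are at most $2\,\opt-1$ of them. Each non-trivial $Q_i$ lies inside its convex hull $C_i\in\mathcal C$, which is of bounded size (contained in a unit disk, a unit square, or a set of diameter at most $1$), so $C_i$ meets only $O(1)$ grid squares and, being convex, $\partial C_i$ crosses each grid line at most twice and hence only $O(1)$ grid lines in total; in particular $|\mathcal S_2|=O(\opt)$. Next, by induction on $i$ one sees that $\partial Q_i\subseteq\partial P\cup\bigcup_\ell\partial C_\ell$ (the base case $Q_1$ is a component of $(C_1\cap P)\cup I_1$, and the inductive step follows from the definition of $Q_i$ as a component of $((C_i\cap P)\setminus\bigcup_{j<i}Q_j)\cup I_i$). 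Therefore every interior piece is a union of faces of the planar subdivision obtained by overlaying, inside $P$, the grid, the hull boundaries $\partial C_\ell$, and $\partial P$, and moreover it avoids every boundary piece. I would then count these ``outside'' faces by Euler's formula — or, equivalently, by a charging argument in which each square of $\mathcal S_2$ is credited with one base field and then one extra field for each arc that genuinely separates it, charging each separating arc either to one of the at most $6\,\opt-9$ interior boundary intervals of \Cref{lemma:ibi} or to one of the $O(1)$ grid-line crossings of a single $C_\ell$. Tracking the exact constants for each partition type — the value of $\gamma$, the maximum number of grid cells a bounded piece can meet, and the number of times the relevant convex curves can cross the grid lines — should give at most $10\,\opt$, $18\,\opt$, and $20\,\opt$ interior pieces inside $\mathcal S_2$ for aligned squares, rotated squares, and disk/straight diameter partitions, which together with the $\mathcal S_1$ bound yields $11\,\opt$, $19\,\opt$, and $(20+\pi/2)\,\opt$, matching \Cref{thm:main-2}.

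The hard part is this last counting step. The difficulty is that a single grid cell may be met by many boundary pieces and may be crossed and re-entered by the boundary of a convex hull, so a naive ``one extra face per hull-boundary arc'' bound overcounts by a factor of $\opt$; the argument must exploit both the bounded size of each $C_\ell$ (which caps the number of cells and grid-line crossings per hull) and, crucially, the fact that most of the fragmentation caused by overlapping hulls lies inside the boundary pieces themselves — so those faces are \emph{not} interior pieces — leaving essentially only the $O(\opt)$ interior boundary intervals of \Cref{lemma:ibi} to account for the genuine cuts. Getting the bookkeeping tight enough to reach the specific constants $11$, $19$ and $20+\pi/2$, rather than a weaker unspecified $O(1)$ factor, is the most delicate part of the proof.
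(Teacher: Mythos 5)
Your first step---counting the ``free'' squares (the paper calls the corresponding pieces \emph{complete}) via the area argument $|\mathcal S_1|\gamma^2\le\area P$ together with the lower bounds $\opt\ge\area P/\pi$, $\opt\ge 4\area P/\pi$, $\opt\ge\area P$---is exactly the paper's argument and is fine. The gap is in the second step, which you yourself flag as ``the hard part'' and then do not carry out: you never actually establish a bound on the number of pieces arising in the non-free squares, you only propose an Euler-formula/charging plan whose constants you do not verify. The paper's counting is different from your ``one base field per square plus one per separating arc'' scheme and avoids counting the squares of $\mathcal S_2$ altogether: it classifies each incomplete piece $Q$ as an \emph{edge piece} if $\partial Q$ contains an interval of a grid-square boundary, and a \emph{chip piece} otherwise. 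Each edge piece is charged injectively to the counterclockwise-last endpoint of a maximal interval of $\partial Q\cap\partial S$, which is an intersection point of some hull boundary $\partial C_i$ with a grid line; since each bounded convex $C_i$ meets at most two vertical and two horizontal grid lines (one of each for aligned squares), there are at most $8$ (resp.\ $4$) such points per hull, and with $|\mathcal C|<2\,\opt$ this gives $<16\,\opt$ (resp.\ $<8\,\opt$) edge pieces. Each chip piece is bounded by at least three interior boundary intervals, so \Cref{lemma:ibi} gives $<2\,\opt$ chip pieces. Without an argument of this kind, your per-square crediting risks a count proportional to $|\mathcal S_2|$ times the number of hulls meeting a square, and you have not shown it yields the stated constants.

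There is also a concrete numerical error: you conclude with $(20+\pi/2)\,\opt$ interior pieces for disk and straight diameter partitions ``matching \Cref{thm:main-2},'' but the lemma you are proving asserts $(18+\pi/2)\,\opt$. You have conflated the lemma's bound on the \emph{interior} pieces with the theorem's total approximation factor $20+\pi/2$, which is obtained only after adding the at most $2\,\opt-1$ boundary pieces of \Cref{lemma:boundarybound}. With your accounting the total would be $(22+\pi/2)\,\opt$, which overshoots the claimed ratio; the correct split is $16\,\opt$ edge pieces $+\,2\,\opt$ chip pieces $+\,(\pi/2)\,\opt$ complete pieces $=(18+\pi/2)\,\opt$.
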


\begin{proof}
We say that an interior piece $Q$ is \emph{complete} if it is a full square in $\mathcal S$, and otherwise it is \emph{incomplete}; see \Cref{fig:interiorpiecetypes}.
We first bound the number of complete pieces as follows.
For disk partitions, the complete pieces have area $2$, so we get that the number of complete pieces is at most $\area P/2$.
For straight diameter partitions, each complete piece has area $1/2$, so there are at most $2\area P$ complete pieces.
For square partitions, each complete piece has area $1$, so there are at most $\area P$ complete pieces.

\begin{figure}
\centering
\includegraphics[page=10]{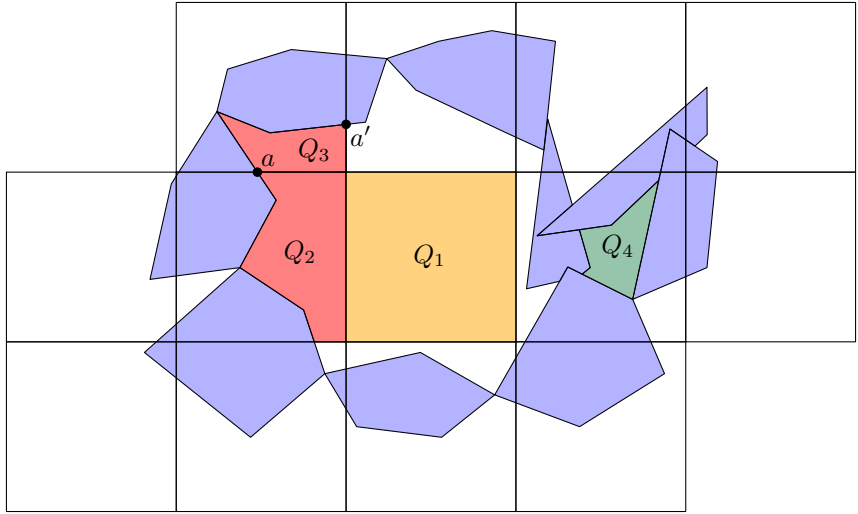}
\caption{The figure shows the three types of interior pieces.
The blue polygons are the boundary pieces.
The square $Q_1$ is a complete interior piece, $Q_2$ and $Q_3$ are edge pieces, and $Q_4$ is a chip piece.
We use the intersection point $a$ to account for $Q_2$ while $a'$ accounts for $Q_3$
The figure is for illustration only; the pieces produced by our algorithms would not look like these (see \Cref{fig:SquarePartition} for an authentic example).
}
\label{fig:interiorpiecetypes}
\end{figure}

In disk, straight diameter and square partitions, we have $\opt\geq \area P/\pi$, $\opt\geq 4\area P/\pi$ and $\opt\geq \area P$, respectively.
We therefore get that the number of complete pieces is at most $\pi\,\opt/2$ for disk and straight diameter partitions and at most $\opt$ for both kinds of square partitions.

We now turn our attention to the incomplete pieces.
We distinguish between two types of incomplete pieces, as follows.
If an interval of the boundary of a square $S\in\mathcal S$ appears on the boundary of an incomplete piece $Q$, we say that $Q$ is an \emph{edge piece}, and otherwise, it is a \emph{chip piece}.
For the ease of presentation, we make the following general position assumptions:
(i) no polygon in $\mathcal C$ has a left-, right-, top- or bottom-most point on the boundary of a square in $\mathcal S$, and (ii) no square in $\mathcal S$ has a corner on the boundary of a polygon in $\mathcal C$.
This can be obtained either by symbolic perturbation~\cite{edelsbrunner1990simulation} or by choosing the grid defining the squares $\mathcal S$ carefully.
The lemma is also true without these assumptions, but with the assumptions, we can skip some tedious special cases.

Consider an edge piece $Q$ in a square $S$ and let $I$ be a maximal interval in $\partial Q\cap \partial S$.
Let $a$ be the end of $I$ in counterclockwise order around $\partial Q$.
Then $a$ is an intersection point in $\partial C\cap \partial S$ for a convex polygon $C\in\mathcal C$.
In this way, we can account for each edge piece by such an endpoint.
Note that although the point $a$ is also on the boundary of an edge piece $Q'$ in a neighbouring square $S'$ of $S$, we will not use $a$ to account for $Q'$, since $a$ will be the beginning and not the end of a maximal interval in $\partial Q'\cap \partial S'$.

Let us first consider rotated square, disk and straight diameter partitions.
Here, a polygon $C\in\mathcal C$ can intersect at most two vertical lines and at most two horizontal lines in the grid that defines the squares $\mathcal S$, so there are at most eight intersection points; see \Cref{fig:intersectionpoints}.
Using that $\vert \mathcal Q_\partial\vert< 2\,\opt$ by \Cref{lemma:boundarybound}, we get that there are less than $16\,\opt$ intersection points between the boundaries of polygons in $\mathcal C$ and the boundaries of the squares $\mathcal S$.
We conclude that there are less than $16\,\opt$ edge pieces in rotated square, disk and straight diameter partitions.

\begin{figure}
\centering
\includegraphics[page=11]{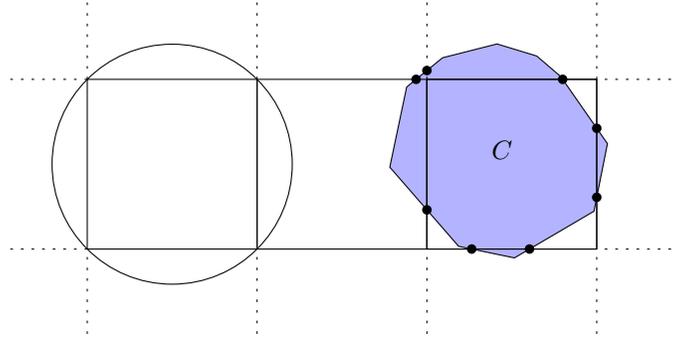}
\caption{The figure shows the grid we use for disk partitions.
The convex polygon $C$ can be contained in a unit disk, and its boundary has eight intersection points with the lines defining the grid, which is the maximum.
}
\label{fig:intersectionpoints}
\end{figure}

In aligned square partitions, each square $S\in\mathcal S$ is an axis-aligned unit square, and each polygon $C\in\mathcal C$ is contained in an axis-aligned unit square.
Therefore, $C$ intersects at most one vertical and one horizontal line of the square grid, so there are at most four intersection points.
We therefore get less than $8\,\opt$ edge pieces in this case.

The boundary of each chip piece $Q$ is a union of three or more free intervals, and each of these intervals is a concave interval on $\partial Q$.
By \Cref{lemma:ibi} we then get that there are less than $6\,\opt/3=2\,\opt$ chip pieces for all types of partitions.

In total, we get less than $(18+\pi/2)\,\opt$ interior pieces for disk and straight diameter partitions, less than $11\,\opt$ interior pieces for aligned square partitions, and less than $19\,\opt$ interior pieces for rotated square partitions.
\end{proof}

We can now conclude our work on these types of partitions by giving the following proof.

\begin{proof}[Proof of \Cref{thm:main-2} for square, disk and straight diameter partitions.]
The approximation ratios follow from \Cref{lemma:boundarybound,lemma:DSDPinterior}, so here we focus on describing algorithms with the claimed running times for the estimation and construction problems.
For the estimation problem, we first compute the cardinality $x_\partial$ of our boundary partition $\CQ_\partial$ using the stated running times for estimation, where we ``jump over'' the trivial intervals to avoid an additive term of $\opt$ in the time complexity.
That is, suppose that our starting point $a_{i-1}$ is on an edge $e$ of $P$ and let $p$ be the end of $e$ so that the segment $a_{i-1}p\subset e$ remains to be covered by boundary pieces.
By first computing the length of $a_{i-1}p$, we can in $O(1)$ time compute how many trivial intervals our algorithm will construct on $e$ until we get to a non-trivial one, which is the first containing $p$.
For aligned square partitions, we know after $x_\partial$ has been computed that our constructed partition of all of $P$ would have at most $4x_\partial$ edge pieces, $x_\partial$ chip pieces and $\lfloor\area(P)\rfloor$ complete pieces.
We also know that the sum of these numbers is at most $13\opt$.
We hence return $6x_\partial+\lfloor\area(P)\rfloor$.

For rotated square partitions, we similarly return $10x_\partial+\lfloor\area(P)\rfloor$, because we have at most $8x_\partial$ edge pieces in this case.
For disk and straight diameter partitions, we return $10x_\partial+\lfloor\area(P)/2\rfloor$ and $10x_\partial+\lfloor 2\area(P)\rfloor$, respectively, adjusting for the size of the grids we use.
The running times are explained in the proof of \Cref{lemma:boundaryconstruction}.

For the construction of the interior pieces, we compute the overlay $\mathcal L$ of all free intervals and the squares $\mathcal S$.
Note that there are $O(\opt)$ squares since each piece in an optimal partition can intersect at most $O(1)$ squares in $\mathcal S$.
The total complexity of the free intervals (not counting intersections with the squares $\mathcal S$) is $O(n+\opt)$, since each corner is a corner of $P$ or an endpoint of an interval, and there are $O(\opt)$ intervals by~\Cref{lemma:ibi}.
The free intervals of a single convex polygon $C_i$ have $O(1)$ intersection points with the boundaries of the squares $\mathcal S$.
Therefore, the complexity of $\mathcal L$ is $O(n+\opt)$.

Using a standard overlay algorithm~\cite[Ch.~2]{de1997computational}, we would get a running time of $O((n+\opt)\log(n+\opt))$, but in this special case where $\mathcal S$ are just squares from a regular grid, we can use an improved version with running time $O((n +\opt)\log n)$:
The idea is that we do not need to handle the edges of $\mathcal S$ in the same way as the edges of the free intervals.
We use a horizontal sweepline $\ell$ starting above $P$ and going down.
We define a \emph{free edge} to be an edge of a free interval.
There are four types of events, defined by the $y$-coordinates of the following objects: (i) the top endpoint of a free edge, (ii) the bottom endpoint of a free edge, (iii) a horizontal grid line $h$, and (iv) the intersection point between a free edge and a vertical grid line.
We store the free edges currently intersected by $\ell$ in a status data structure $\mathcal T$, which is a balanced binary search tree sorted by the $x$-coordinates of the intersections between the sweepline $\ell$ and the edges.
For each vertical grid line $k$, we also store a pointer $a_k$ to the vertex of $\mathcal L$ on $k$ that is immediately above the sweepline $\ell$ (this may be a gridpoint or an intersection point between a free edge and $k$).
For each free edge $e$ in $\mathcal T$, we likewise store a pointer $b_e$ to the vertex of $\mathcal L$ on $e$ that is immediately above $\ell$, which may be the top endpoint of $e$ or an intersection between $e$ and a grid line.
The events are handled as follows.
\begin{itemize}
\item[(i)] We add the top endpoint $q$ of the new edge $e$ to $\mathcal L$ and set $b_e$ to point on $q$.
We add $e$ to $\mathcal T$.
We also check if $e$ intersects one of the neighbouring vertical grid lines, and if so, we add the corresponding event of type (iv) to the event queue.

\item[(ii)] We add the bottom endpoint $q$ of the edge $e$ to $\mathcal L$ and construct an edge from $q$ to the preceding vertex $b_e$.
We then remove $e$ from $\mathcal T$.

\item[(iii)]
We run through the entire tree $\mathcal T$ in order of $x$-coordinates and construct the bottom part of the row of squares in $\mathcal S$ bounded from below by the horizontal line $h$.
This involves finding the intersections between the free edges in $\mathcal T$ and $h$ and constructing the segments from these intersection points to the preceding vertices on the edges above $\ell$.
We update all the pointers $a_k$ to point to the new grid points on $h$ and we update the pointers $b_e$ to point at the intersection points between the free edges and $h$.

\item[(iv)]
When we reach the intersection point $q$ of a free edge $e$ with a vertical grid line $k$, we add $q$ to $\mathcal L$ and construct the edges from $q$ to the preceding vertices $a_k$ and $b_e$.
We update $a_k$ and $b_e$ to point to $q$ and also check if $e$ intersects a neighbour of $k$ below $\ell$, in which case we add the corresponding event of type (iv) to the event queue.
\end{itemize}

Events of types (i), (ii) and (iv) take $O(\log n)$ time to handle, and since there are $O(n+\opt)$ of them, they take $O((n +\opt)\log n)$ time in total.
In events of type (iii), we use constant time for each edge we construct of the overlay $\mathcal L$, so since $\mathcal L$ has complexity $O(n+\opt)$, we use that much time on events of type (iii) in total.
After computing $\mathcal L$, we just need to detect which faces are fields.
This can be done by a flood fill algorithm, starting from each free interval $I\subset\partial C_i$ and traversing the faces towards the exterior of the polygon $C_i$.\qedhere
\end{proof}

\subsection{Geodesic diameter and perimeter partitions}\label{sec:interiorgeodesic}

For geodesic diameter and perimeter partitions, we again use a square grid $\mathcal S$, where the edge length of each square is a constant $\gamma >0$ to be chosen later.
Recall that a \emph{field} in a square $S\in\mathcal S$ is the closure of a connected component of $(S\cap P)\setminus\bigcup\mathcal Q_\partial$, and that $\mathcal I$ is the intervals on $\partial P$ that are used to define the boundary pieces $\mathcal Q_\partial$.
We say that a field $F$ is \emph{trivial} if no non-trivial interval from $\mathcal I$ appears on the boundary of $F$; see \Cref{fig:fields}.
Trivial intervals can appear on the boundary of a trivial field $F$, but these intervals must be contained in edges of $P$ that pass all the way through the square $S$ containing $F$.
In particular, a trivial field is a convex polygon contained in $S$.
Hence, a trivial field satisfies the size constraint, and we will therefore use it directly as a piece.

\begin{figure}
\centering
\includegraphics[page=6]{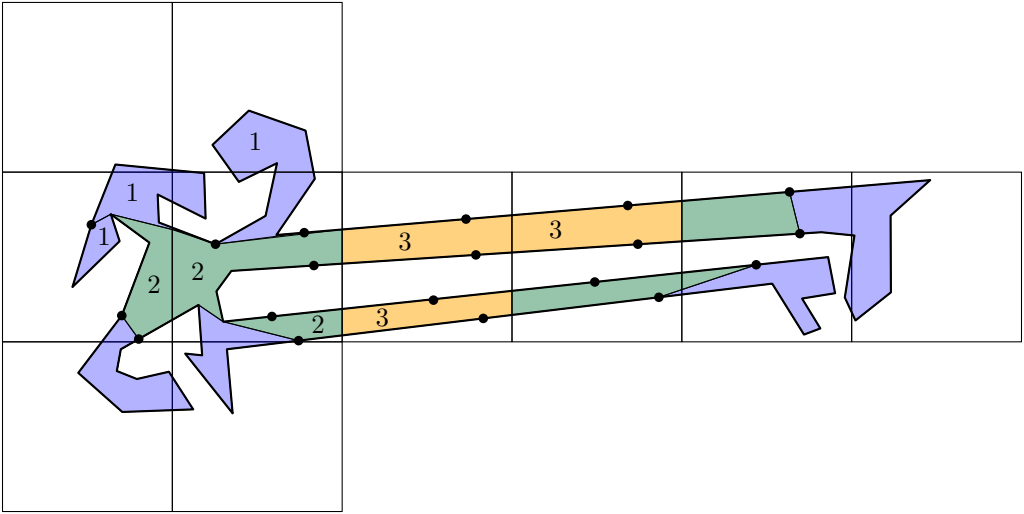}
\caption{A polygon $P$ with a perimeter partition.
The blue pieces are boundary pieces.
The dots mark the endpoints of the intervals $\mathcal I$ covered by the boundary pieces.
The green regions are the non-trivial fields and yellow regions are the trivial fields.
The figure is for illustration only, and our algorithm would use a grid of much smaller squares.
Color codes: 1 blue, 2 green, 3 yellow.
}
\label{fig:fields}
\end{figure}

Consider a non-trivial field $F$ in a square $S$; see \Cref{fig:perimetersemiboundary} (left).
We need to be careful when defining the interior pieces for geodesic diameter and perimeter partitions:
We cannot directly use $F$ as a piece (as we did in \Cref{sec:DSSDinterior}), since $F$ can have arbitrarily large geodesic diameter or perimeter.
Instead, we split $F$ into \emph{subfields}, as described in the following.

We first split the free intervals into subintervals, that we call \emph{fragments}.
Each fragment can have length at most $\delta$ for some constant $\delta>0$ depending on the type of partition.
We split each free interval into a minimum number of fragments by traversing the interval and repeatedly splitting whenever we have traversed a prefix of length $\delta$.

Let us observe that the free intervals, and hence also the fragments, are convex intervals on the boundaries of the boundary pieces (just as in disk, square and straight diameter partitions):
Recall that each boundary piece $Q_i$ is bounded by a maximal interval $\partial P[a_{i-1},a_i]$ and the shortest path $\pi=\pi_P(a_{i-1},a_i)$ in $P$ between the endpoints $a_{i-1}$ and $a_i$.
Consider an interior corner $v$ of $\pi$ that is concave with respect to $Q_i$.
Such a corner $v$ is on the shared boundaries between $Q_i$ and another boundary piece $Q_j$, so $v$ is not in the interior of a free interval of $Q_i$, and it follows that all free intervals are convex intervals.
Let us remark that despite this fact, the situation is still more complicated than for disk, square and straight diameter partitions, as a single free interval for geodesic diameter and perimeter partitions can be a spiral that makes an arbitrary number of revolutions and thus, for instance, enters and leaves the same square in $\mathcal S$ an arbitrary number of times.

\begin{figure}
\centering
\includegraphics[page=5]{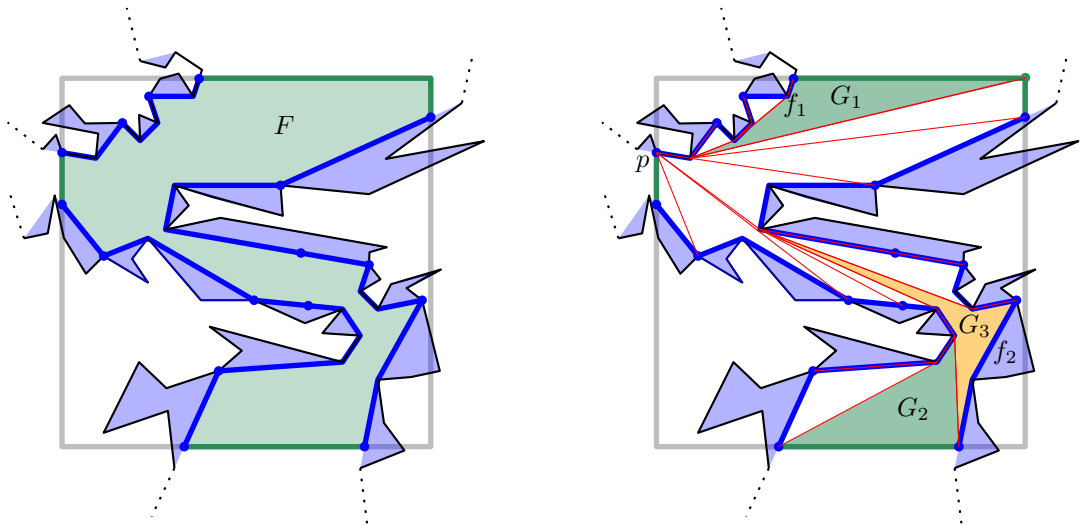}
\caption{The figure shows how we split a field $F$ in a square $S$ into subfields and assign them to the fragments.
The boundary pieces are blue, the fat blue curves are the fragments, and the blue points are the endpoints of the fragments (restricted to $S$).
The boundary of $P$ is sketched in black (not to scale; it should be much more complicated to result in this many boundary pieces in a single square).
To the right is shown how the field $F$ is split into subfields by shortest paths to the point $p$.
The subfields $G_1$ and $G_2$ are edge subfields, and $G_3$ is a fragment subfield.
Here, $G_1$ is assigned to the fragment $f_1$ and $G_2$ and $G_3$ are assigned to $f_2$.}
\label{fig:perimetersemiboundary}
\end{figure}

To define the subfields in a field $F$, we choose an arbitrary fixed point $p$ in $F$ (for instance, a corner of $F$).
For each point $c$ of $\partial F$ that is either a corner of $\partial S$ or a shared endpoint of two fragments bounding $F$, we cut $F$ along the geodesic shortest path $\pi_F(c,p)$ in $F$ from $c$ to $p$.
These paths partition $F$ into \emph{subfields}.
There exist two types of subfields:
A subfield $G$ is an \emph{edge subfield}
if a segment $s$ on an edge of $S$ appears on the boundary of $G$.
Otherwise, $G$ is a \emph{fragment subfield}, and here an interval $f'$ on a fragment $f$ bounds $G$ (the full fragment $f$ may also intersect other squares in $\mathcal S$, which is why only an interval on $f$ appears on the boundary of $G$ in general).

Consider an edge subfield $G$ and let $a$ and $b$ be the endpoints of the segment $s\subset\partial S$; see \Cref{fig:fragmentacrosssquares} (left).
Let $\pi_a=\pi_F(a,p)$ and $\pi_b=\pi_F(b,p)$ be the shortest paths in $F$ from $a$ and $b$ to $p$, respectively, and let $q$ be the first common point of $\pi_a$ and $\pi_b$.
Then the boundary of $G$ consists of $s$ and the prefixes of $\pi_a$ and $\pi_b$ until the common point $q$.
An analogous description holds for a fragment subfield $G$, where we define $a$ and $b$ as the endpoints of the interval $f'$ of a fragment bounding $G$.

We assign a fragment subfield $G$ to the fragment $f$ bounding $G$; see \Cref{fig:fragmentacrosssquares} (right).
For an edge subfield $G$ bounded by a segment $s\subset\partial S$, we consider the next fragment $f$ following $s$ on $\partial F$ in counterclockwise order and then assign $G$ to $f$.
Note that since a fragment can intersect more than one square, a piece is not necessarily contained in a single square, but it will be contained in the union of at most a small constant number of squares, since we will define $\delta$ to be smaller or not much larger than $\gamma$.

Using appropriate values of $\gamma$ and $\delta$, we can now bound the number of pieces as stated in \Cref{lemma:GDinterior,lemma:perimeterbound}.
A key insight in the proofs is that since the boundary of each subfield consists of three concave intervals in a square of size $\gamma\times\gamma$, the geodesic diameter and perimeter of each subfield is $O(\gamma)$.

\begin{figure}
\centering
\includegraphics[page=26]{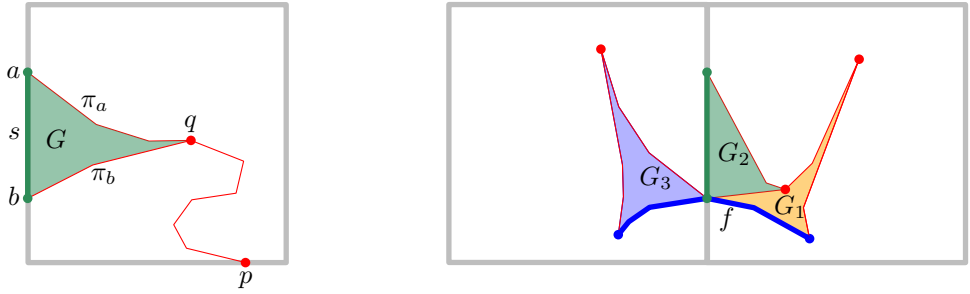}
\caption{Left:
An edge subfield bounded by a segment $s\subset\partial S$ and the prefixes of the shortest paths $\pi_a$ and $\pi_b$ until their first common point $q$.
Right: Three subfields assigned to one fragment $f$, so the union $G_1\cup G_2\cup G_3$ will constitute one piece.
Here, $G_1$ and $G_3$ are fragment subfields and $G_2$ is an edge subfield.}
\label{fig:fragmentacrosssquares}
\end{figure}

\begin{lemma}\label{lemma:GDinterior}
Choosing $\gamma=0.127$ and $\delta=0.133$, the method described above results in less than $70\,\opt$ interior pieces of geodesic diameter at most $1$.
\end{lemma}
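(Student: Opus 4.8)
The plan is to verify two properties of the construction run with $\gamma=0.127$ and $\delta=0.133$: \emph{(A)} every interior piece it produces has geodesic diameter at most $1$, and \emph{(B)} it produces fewer than $70\,\opt$ interior pieces.

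\emph{Correctness (A).} A trivial field is a convex polygon contained in a single grid square, so its geodesic diameter equals its straight diameter, which is at most $\sqrt2\,\gamma<1$. For a non-trivial piece $Q$, recall that $Q$ is the union of the subfields assigned to one fragment $f$, and that $|f|\le\delta$. I would first bound $\diam_{\curly}G$ for a single subfield $G$. The boundary of $G$ consists of three chains that are concave with respect to $G$: a \emph{base}, which is either a segment on the boundary of its grid square $S$ (length at most $\gamma$) or an interval on a fragment (length at most $\delta$), and hence in either case a short convex arc lying in $S$; and two reflex chains $\pi_F(a,q)$ and $\pi_F(b,q)$, the prefixes of the shortest paths from the endpoints $a,b$ of the base to the apex $q$ at which they first meet. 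These two reflex chains emanate from $q$ and fan outwards, so $G$ is star-shaped from $q$; combined with $G\subseteq S$ this yields $\diam_{\curly}G=O(\gamma)$, with the implied constant small enough that $2\,\diam_{\curly}G+\delta<1$ for the stated values. Now take two points of $Q$, lying in subfields $G_1$ and $G_2$. Each subfield assigned to $f$ meets $f$ (a fragment subfield along a sub-interval of $f$, an edge subfield at the common endpoint of its base and $f$), and these meeting sets along $f$ form a connected sub-arc of $f$ of length at most $\delta$; so one may route from the first point to $f$ inside $G_1$ (at cost $\le\diam_{\curly}G_1$), along $f$ (at cost $\le\delta$), and from $f$ to the second point inside $G_2$ (at cost $\le\diam_{\curly}G_2$). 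Hence $\diam_{\curly}Q\le 2\,\diam_{\curly}G+\delta<1$.

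\emph{Counting (B).} Each non-trivial piece is obtained from one fragment, so their number is at most the number of fragments. A boundary piece $Q_i$ was chosen so that $\diam_{\curly}Q_i\le1$, and its interior boundary intervals all lie on the wall $\pi_P(a_{i-1},a_i)$ of $Q_i$; since this path bounds $Q_i$, it is the shortest path in $Q_i$ between $a_{i-1}$ and $a_i$, so its length is at most $\diam_{\curly}Q_i\le1$. As there are at most $2\,\opt-1$ boundary pieces by \Cref{lemma:boundarybound}, the total length of all interior boundary intervals is at most $2\,\opt-1$, and by \Cref{lemma:ibi} there are at most $6\,\opt-9$ of them. Each interior boundary interval $I$ is cut into $\lceil|I|/\delta\rceil\le|I|/\delta+1$ fragments, so the number of fragments---and hence of non-trivial pieces---is less than $2\,\opt/\delta+6\,\opt$. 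For the trivial fields: each is a convex polygon contained in a grid square and so satisfies the size bound, a trivial field equal to an entire grid square lies in the interior of $P$, so the number of these is at most $\area P/\gamma^2$; since every piece of an optimal partition has straight diameter at most $1$ and therefore area at most $\pi/4$, we get $\area P\le\frac{\pi}{4}\,\opt$ and thus at most $\frac{\pi}{4\gamma^2}\,\opt$ such fields. The remaining trivial fields each carry a portion of $\partial S$ or of $\partial P$ on their boundary and are bounded by a charging argument analogous to the treatment of edge and chip pieces in \Cref{sec:DSSDinterior}, contributing only a further $O(\opt)$. Plugging in $\gamma=0.127$ and $\delta=0.133$ makes the sum of all contributions smaller than $70\,\opt$.

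\emph{Main obstacle.} The delicate point is the bound $\diam_{\curly}G=O(\gamma)$ for a single subfield. A priori the chains $\pi_F(a,q)$ and $\pi_F(b,q)$ are shortest paths through the field $F$, which may be a long, winding corridor---a single interior boundary interval can spiral arbitrarily often---so there is no immediate reason these chains are short. What saves us is that $a$ and $b$ are \emph{consecutive} cut points along $\partial F$, so the base between them is a single short convex arc; two shortest paths leaving the two ends of such a short arc must funnel together to their first common point $q$ almost immediately, which forces the funnel $G$ to be geometrically small. Making this rigorous, and then balancing the resulting constant against both the fragment length $\delta$ (needed for correctness) and the grid size $\gamma$ (which controls the number of full grid squares used as pieces, trading off against the number of fragments), is precisely where the numbers $\gamma=0.127$ and $\delta=0.133$ are pinned down; getting all these estimates to close simultaneously under the target factor $70$ is the bulk of the remaining work.
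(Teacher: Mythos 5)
Your overall architecture matches the paper's (complete squares counted by area via the isodiametric bound, incomplete fields split into subfields assigned to fragments, routing two points through the fragment at cost $2\cdot\sup_G\diam_{\curly}G+\delta$, and fragments counted as $6\,\opt+2\,\opt/\delta$), but two gaps remain, and the second one actually breaks the stated constant.

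First, the bound $\diam_{\curly}G=O(\gamma)$ for a single subfield, which you correctly identify as the crux, is not established; your proposed route via star-shapedness from $q$ and a ``funneling'' heuristic does not obviously produce the explicit constant needed. The paper's argument is short and worth knowing: each subfield $G$ is bounded by exactly three chains, each \emph{concave} with respect to $G$ (the base, and the two shortest-path prefixes, whose interior turns are at reflex corners of $F$), so $G$ lies in the triangle $T$ spanned by the three chain endpoints; a concave chain from $a$ to $b$ inside $T$ has length less than $\Vert ac\Vert+\Vert cb\Vert$, whence $\diam_{\curly}G<\peri T\le(2+\sqrt2)\gamma$ since a maximum-perimeter triangle in a $\gamma\times\gamma$ square is spanned by three of its corners. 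This kills your worry about long winding corridors without any funnel analysis. Second, your accounting of trivial fields that are not full grid squares as ``a further $O(\opt)$ by a charging argument analogous to the edge/chip pieces'' cannot be afforded: with $\gamma=0.127$ and $\delta=0.133$ the explicit terms already sum to $\frac{\pi}{4\gamma^2}+6+\frac{2}{\delta}\approx 69.7$, so there is no slack for any additional $c\cdot\opt$ with $c\ge1$. The paper avoids this entirely by analyzing \emph{as if} every incomplete field (trivial or not) were split into subfields and assigned to fragments, so all incomplete fields are absorbed into the $6\,\opt+2\,\opt/\delta$ fragment count and no separate charge for non-complete trivial fields is needed. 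As written, your bound is $(69.7+O(1))\,\opt$, not $70\,\opt$.
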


\begin{proof}
We first consider the complete interior pieces, i.e., the squares from $\mathcal S$ that are also used as pieces.
Our algorithm creates at most $\area P/\gamma^2$ such pieces.
A disk of diameter $1$ has the largest area of all shapes with geodesic diameter at most $1$, so we have $\opt\geq 4\area P/\pi$, and the number of complete pieces is thus at most $\area P/\gamma^2\leq \frac{\pi\,\opt}{4\gamma^2}$.

We make the analysis as if we were splitting all remaining fields into subfields, although the algorithm will actually only split the non-trivial fields; this just gives an upper bound on the resulting number of pieces.
Note that each edge or fragment subfield subfield $G$ is a so-called \emph{pseudo-triangle}, i.e., $G$ is bounded by three concave intervals (for edge subfields, one of these concave intervals is just a segment on the boundary of $S$).
Hence, every subfield $G$ is contained in a triangle $T$, whose corners $a,b,c$ are the endpoints of the three concave intervals.
The concave interval from $a$ to $b$ has length less than $\Vert ac\Vert+\Vert cb\Vert$, and the length of the other two intervals can be bounded in a similar way.
We obtain that the geodesic diameter of $G$ is less than the perimeter of $T$.
It is easy to verify that a triangle in the square $S$ with maximum perimeter is spanned by three corners of $S$, so the perimeter of $T$ is at most $(2+\sqrt 2)\gamma$, which is also a bound on the geodesic diameter of each subfield $G$.

Consider now two points $u$ and $v$ in the piece assigned to a fragment $f$.
Then $u$ and $v$ are contained in subfields $G_u$ and $G_v$, respectively, and we get a path from $u$ to $v$ by first walking in $G_u$ to $f$ and then along $f$ to $G_v$, from which we go to $v$.
Since the geodesic diameter of each subfield is at most $(2+\sqrt 2)\gamma$ and $f$ has length $\delta$, this path has length at most $2(2+\sqrt 2)\gamma+\delta$.
To bound the geodesic diameter, we therefore require that $2(2+\sqrt 2)\gamma+\delta\leq 1$.
Recall that the fragments are contained in $2\,\opt$ shortest paths of length at most $1$ by \Cref{lemma:boundarybound}, so we split the free intervals at most $2\,\opt/\delta$ times in order to create the fragments.
We start with at most $6\,\opt$ free intervals by \Cref{lemma:ibi}.
We conclude that we create at most $6\,\opt + 2\,\opt/\delta$ fragments.

In total, we create at most $(\frac{\pi}{4\gamma^2} + 6 + 2/\delta)\,\opt$ interior pieces.
Under the restriction that $2(2+\sqrt 2)\gamma +\delta\leq 1$, this is minimized for $\gamma\approx 0.127$ and $\delta=1-2(2+\sqrt 2)\gamma\approx 0.133$, in which case the number of piece is bounded by a bit less than $70\,\opt$.
\end{proof}

\begin{lemma}\label{lemma:perimeterbound}
Choosing $\gamma=0.00490$ and $\delta=0.00243$, the described method results in less than $3726\,\opt$ interior pieces of perimeter at most $1$.
\end{lemma}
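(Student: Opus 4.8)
The plan is to mirror the proof of \Cref{lemma:GDinterior}, replacing geodesic diameter by perimeter throughout. Concretely, I would bound the number of interior pieces by separately counting (a) the \emph{complete} interior pieces, i.e.\ the full $\gamma\times\gamma$ squares of $\mathcal S$ that are used directly, (b) bounding the perimeter of a single subfield, (c) bounding the perimeter of a whole piece, which is a union of subfields assigned to one fragment, and (d) counting the fragments; then (e) assembling these counts into a bound of the form $\bigl(\tfrac{1}{4\pi\gamma^2}+6+\tfrac1\delta\bigr)\opt$ and optimising over $\gamma$ and $\delta$. As in the geodesic case, trivial fields are convex and contained in a single square, hence have perimeter at most $4\gamma\le1$, so they are used directly and need no splitting; the analysis proceeds as if all remaining fields were split into subfields.

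For (a): a planar region of perimeter at most $1$ has area at most $1/(4\pi)$ by the isoperimetric inequality, with equality for the disk of circumference $1$, so $\opt\ge 4\pi\area P$; since the complete pieces are pairwise interior-disjoint $\gamma\times\gamma$ squares inside $P$, there are at most $\area P/\gamma^2\le \tfrac{\opt}{4\pi\gamma^2}$ of them. For (b): exactly as in the geodesic argument, every subfield $G$ is bounded by three concave intervals — one being a segment on $\partial S$ (edge subfield) or a convex sub-interval of a fragment (fragment subfield), the other two being prefixes of shortest paths that stay inside the $\gamma\times\gamma$ square — so $G$ is contained in the triangle $T$ spanned by the three endpoints, each bounding interval has length at most the sum of the two opposite sides of $T$, and hence $\peri(G)\le 2\peri(T)\le 2(2+\sqrt2)\gamma$. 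Here I would reuse the observation made just before \Cref{lemma:GDinterior} that the interior boundary intervals, and therefore the fragments, are convex intervals on the boundaries of the boundary pieces.

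Step (c) is where the perimeter version genuinely diverges from the geodesic one, and it is the main obstacle. For geodesic diameter it is enough to route any two points of a piece through its fragment, a purely local argument; perimeter is a global quantity, so we must bound the total boundary length of $Q_f=\bigcup_i G_i$. The plan: first, a fragment $f$ has length at most $\delta$ and, being a convex arc, meets at most four grid squares once $\delta$ is taken to be a sufficiently small fraction of $\gamma$ (this is what rules out the spiralling behaviour that in general lets an interior boundary interval enter and leave a square arbitrarily often); second, in each such square only a bounded number of subfields are assigned to $f$ — the unique fragment subfield whose base is $f\cap S$, together with the chain of edge subfields immediately preceding $f$ along $\partial F$, which is short because after a bounded number of steps it reaches a corner of $S$ or another fragment; hence $Q_f$ is a union of at most a constant number $N_0$ of subfields, and $\peri(Q_f)\le\sum_i\peri(G_i)\le 2(2+\sqrt2)\gamma\,N_0$. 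Requiring this to be at most $1$ forces $\gamma$ below an explicit constant. The delicate part is making $N_0$ explicit and dealing with subfields whose boundary also contains pieces of $\partial P$.

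For (d): by \Cref{lemma:ibi} there are fewer than $6\,\opt$ interior boundary intervals, and each lies on the shortest path bounding some boundary piece $Q_i$. For \emph{perimeter} partitions that shortest path has length at most $1/2$: it joins the endpoints of the covered arc $\partial P[a_{i-1},a_i]$, so it is no longer than that arc, while the two together form $\partial Q_i$ and hence have total length at most $\peri(Q_i)\le1$. Since there are at most $2\,\opt$ boundary pieces (\Cref{lemma:boundarybound}), the interior boundary intervals have total length at most $\opt$, and splitting them into pieces of length at most $\delta$ yields fewer than $6\,\opt+\opt/\delta$ fragments. Combining (a)--(d), the number of interior pieces is less than $\bigl(\tfrac{1}{4\pi\gamma^2}+6+\tfrac1\delta\bigr)\opt$. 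Subject to the perimeter constraint from (c) (together with the requirement that $\delta$ be a small enough fraction of $\gamma$ for the four-squares bound, which pins both parameters to constants), this is minimised by taking $\gamma$ and $\delta$ as large as allowed; the stated choice $\gamma=0.00490$, $\delta=0.00243$ then gives a bound of less than $3726\,\opt$. As noted, the real work lies in making step (c) rigorous and extracting the precise value of $N_0$ and hence of $\gamma$.
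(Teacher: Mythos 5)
Your steps (a), (b) and (d) match the paper's proof: the isoperimetric bound $\opt\ge 4\pi\area P$ gives at most $\frac{\opt}{4\pi\gamma^2}$ complete pieces; the triangle-containment argument bounds each subfield's perimeter; and the fragment count $(6+1/\delta)\,\opt$ follows from \Cref{lemma:ibi}, \Cref{lemma:boundarybound} and the observation that each bounding shortest path has length at most $1/2$. The problem is step (c), which you yourself flag as unfinished — and it is precisely the step that produces the constraint on $\gamma$ and $\delta$ and hence the constants $0.00490$, $0.00243$ and $3726$. Your sketch (bound the number $N_0$ of subfields assigned to a fragment by an explicit constant, multiply by the uniform per-subfield bound $2(2+\sqrt 2)\gamma$) is a genuinely different route from the paper's, and as written it cannot recover the stated values: without an explicit $N_0$ you have no explicit constraint linking $\gamma$ and $\delta$, so the claim that ``the stated choice then gives a bound of less than $3726\,\opt$'' is unsupported. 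Note also that $N_0$ is not obviously small: a fragment can partition the up to four squares it meets into many regions (the paper's \Cref{fig:spiralfragment} shows ten), each of which can receive up to six assigned subfields, so a naive count of $N_0$ would force $\gamma$ well below $0.0049$ and change the final constant.

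The paper avoids counting subfields globally. For a fragment $f$, it considers the regions $S_1,\ldots,S_m$ into which $f$ partitions the at most four squares it intersects, and bounds the \emph{total} perimeter $\sum_{i=1}^m\peri S_i\le 2\delta+16\gamma$ (each sub-interval of $f$ contributes its length twice, plus the perimeters of four squares), independently of $m$. Each subfield assigned to $f$ lies in some $S_i$ and, by the triangle argument with the triangle's corners in $S_i$, has perimeter at most $2\peri S_i$; at most six subfields per region are assigned to $f$ (one fragment subfield and at most five edge subfields). Hence the piece assigned to $f$ has perimeter at most $\sum_i 12\peri S_i\le 24\delta+192\gamma$, and the constraint $24\delta+192\gamma\le 1$ is what is optimised against $\frac1{4\pi\gamma^2}+6+\frac1\delta$ to give $\gamma\approx0.00490$, $\delta=1/24-8\gamma\approx0.00243$ and the bound $3726\,\opt$. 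To complete your proof you would either need to adopt this region-based accounting or carry out the explicit computation of $N_0$ and redo the optimisation, which would land on different parameter values and a different constant.
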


\begin{figure}
\centering
\includegraphics[page=12]{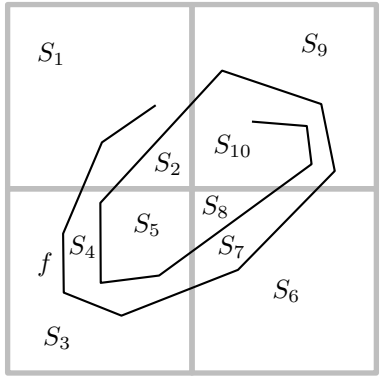}
\caption{The figure shows how a fragment $f$ can partition four squares into regions $S_1,\ldots, S_{10}$ when constructing a perimeter partition.
The figure is not to scale ($f$ is too long here).}
\label{fig:spiralfragment}
\end{figure}

\begin{proof}
Note that a disk of radius $\frac 1{2\pi}$ has the largest area of any shape of perimeter $1$.
The complete interior pieces in our partition have area $\gamma^2$, so there are at most $\frac \opt{4\gamma^2\pi}$ complete pieces.

We now turn our attention to the number of pieces we create from incomplete fields.
As in the proof of \Cref{lemma:GDinterior}, assume that we split all fields that are not full squares into subfields, which gives an upper bound on the number of pieces.
We require that $\delta\leq \gamma$ so that a fragment can intersect at most four different squares in $\mathcal S$.
Consider a fragment $f$ and let $\mathcal S'\subset\mathcal S$ be the set of up to four squares intersected by $f$.
The fragment $f$ partitions the squares $\mathcal S'$ into regions $S_1,\ldots,S_m$; see \Cref{fig:spiralfragment}.
Each region $S_i$ is bounded by some intervals on $f$ and some intervals on the boundary of a square $S\in\mathcal S'$.
The regions $S_i$ are, in general, not contained in $P$, but we can use them to bound the perimeter of the subfield that constitute the pieces of our partition.
Note that for a square $S\in\mathcal S$, each interval $f'$ in $f\cap S$ appears twice on the boundaries of regions $S_i$ in $S$: either there are different regions on the two sides of $f'$, or the two sides of $f'$ contribute to the perimeter of the same region.
We therefore get
\[
\sum_{i=1}^m \peri S_i \leq 2\delta+16\gamma.
\]
Each subfield from $S\in\mathcal S'$ that we assign to $f$ is contained in one of these regions $S_i$.
As in the proof of \Cref{lemma:GDinterior}, we note that a subfield $G$ in $S_i$ is bounded by three concave intervals and thus contained in a triangle $T$.
It follows that $\peri G\leq 2\peri T\leq 2\peri S_i$, where the latter inequality follows as the corners of $T$ are in $S_i$.
Note that at most six subfields in $S_i$ are assigned to $f$: there can be one fragment subfield and at most five edge subfields (for one edge of $S$, there can be two edge subfields and for the others at most one).
The total perimeter of subfields in $S_i$ assigned to $f$ is therefore  at most $12\peri S_i$.
We then get that the total perimeter of subfields that are assigned to $f$, and thus the perimeter of the resulting interior piece, is at most
\[
\sum_{i=1}^m 12\peri S_i \leq 12(2\delta+16\gamma)=24\delta+192\gamma.
\]

To get pieces of bounded perimeter, we hence require that $24\delta+192\gamma\leq 1$.
Consider an interval $I_i=\partial P[a_{i-1},a_i]$ and the shortest path $\pi=\pi(a_{i-1},a_i)$.
We note that since $Q_i$ has perimeter at most $1$, the length of $\pi$ is at most $1/2$.
Similarly as in the proof of \Cref{lemma:GDinterior}, we can then bound the number of fragments by $(6+ 1/\delta)\,\opt$, and the total number of interior pieces (complete and incomplete) is at most $(\frac 1{4\gamma^2\pi} + 6 + 1/\delta)\,\opt$.
We obtain the minimum under the constraint that $24\delta+192\gamma\leq 1$ when $\gamma\approx 0.00490$ and $\delta=1/24-8\gamma\approx 0.00243$, in which case we get less than $3726\,\opt$ interior pieces.
\end{proof}

We can now conclude our work on geodesic diameter and perimeter partitions by giving the following proof.

\begin{proof}[Proof of \Cref{thm:main-2} for geodesic diameter and perimeter partitions.]
The approximation ratios follow from \Cref{lemma:boundarybound,lemma:GDinterior,lemma:perimeterbound} and the analysis of the running times of solving the estimation problems is analogous to that for the other types of partitions described in \Cref{sec:DSSDinterior}, so here we focus on the time required to solve the construction problems.

As in the proof for the other types of partitions, we construct the interior pieces by first computing the overlay $\mathcal L$ of the squares $\mathcal S$ and the free intervals, and we can again conclude that $\mathcal L$ has complexity $O(n+\opt)$ and can be constructed in $O((n+\opt)\log n)$ time.
We can then identify the trivial and non-trivial fields in time $O(n+\opt)$ by traversing the overlay $\mathcal L$.
The trivial fields are used as interior pieces, and it remains to partition the non-trivial fields into subfields.

To get a bound on the time needed to split the non-trivial fields, we show that the total complexity of all non-trivial fields is $O(n)$.
We define a free interval $I\subset\partial Q$ to be \emph{non-trivial} if $Q$ is a non-trivial piece.
Note that the corners of the non-trivial free intervals from a boundary piece $Q_i\in\mathcal Q_\partial$ are corners of $P$ except possibly the endpoints $a_{i-1}$ and $a_i$ of the interval $I_i$ defining $Q_i$.
Each corner $c$ of $P$ appears on at most four free intervals, namely one starting and one ending at $c$ and one starting and one ending on another piece on whose boundary $c$ forms a concave corner.
We therefore get that the total complexity of all non-trivial free intervals is $O(n)$.

\begin{figure}
\centering
\includegraphics[page=7]{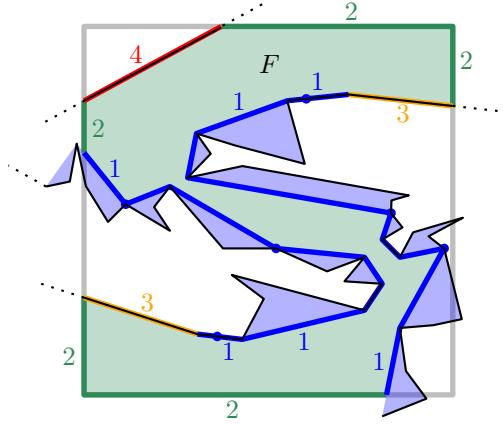}
\caption{A field $F$ with segments of the four types \ref{edgetype:1}--\ref{edgetype:4}.}
\label{fig:segmenttypes}
\end{figure}

Consider a non-trivial field $F$ in a square $S$ and an edge $s\subset\partial F$ of $F$.
We claim that $s$ must have one of the following types; see \Cref{fig:segmenttypes}.
\begin{enumerate}
\item $s$ overlaps with an edge of a non-trivial free interval.\label{edgetype:1}
\item $s\subset \partial S$.\label{edgetype:2}
\item $s$ is contained in a trivial interval in $\mathcal I$ and is incident to an edge of type~\ref{edgetype:1}.\label{edgetype:3}
\item $s$ is contained in a trivial interval in $\mathcal I$ and is incident to two edges of type~\ref{edgetype:2}.\label{edgetype:4}
\end{enumerate}
To see that the list is exhaustive, note that if an edge $s$ of $F$ is contained in a trivial interval, it cannot be incident to another such edge, since their common endpoint would then have to be a corner of $P$, so the intervals would not be trivial.
We now observe that there can be at most $O(1)$ edges of types~\ref{edgetype:2}--\ref{edgetype:4} for each edge of type \ref{edgetype:1}:
Consider a maximal interval $s_1\cup \ldots\cup s_m$ of edges of $F$ of types~\ref{edgetype:2}--\ref{edgetype:4}.
Then $s_1$ and $s_m$ have type~\ref{edgetype:3} while the rest have types \ref{edgetype:2} and \ref{edgetype:4}. 
Each edge of type \ref{edgetype:4} cuts off a corner from $S$, so there can be at most four of them.
We cannot have more than five edges of type \ref{edgetype:2} in a row, because otherwise we would have traversed $\partial S$ completely.
Hence, we have $m=O(1)$, and we can use the edge of type \ref{edgetype:1} after $s_m$ to account for the edges $s_1,\ldots,s_m$.

Each edge of a non-trivial free interval can intersect at most $O(1)$ squares in $\mathcal S$, since the squares have constant edge length $\gamma>0$, so each such edge accounts for $O(1)$ edges of non-trivial fields in total.
Since the non-trivial free intervals have total complexity $O(n)$, we conclude that the total number of edges of type \ref{edgetype:1} is $O(n)$.
Then there are $O(n)$ edges all in all, so the complexity of the non-trivial fields is also $O(n)$.

Since the total length of the non-trivial boundary intervals is $O(n)$ and we split them into fragments of length $\delta>0$, there are likewise $O(n)$ fragment endpoints on the boundaries of the non-trivial fields.
In a non-trivial field $F$ with $n'$ corners, we can compute the shortest paths from the special point $p$ in $O(n')$ time~\cite{DBLP:journals/algorithmica/GuibasHLST87} and then construct the subfields in $F$ in the same time.
We can therefore construct all subfields in $O(n)$ time.
\end{proof}

\section{The area partitioning problem}\label{sec:area}
In this section, we consider the area partitioning problem and consequently prove \cref{thm:main-1}. 
In the area partitioning problem we are given a simple polygon $P$ and $k$ positive real values $a_1, a_2, \ldots, a_k$ such that $\area P = \sum_{i=1}^k a_i$. The goal is to compute a partition of $P$ into exactly $k$ pieces $Q_1, Q_2, \ldots, Q_k$ such that $Q_i$ is a simple polygon and $\area Q_i  = a_i$. 

\begin{remark}
We will assume that $P$ is a simple polygon, but the algorithm can easily be extended to work for polygons with holes. The only additional step is to first ``remove the holes'' of $P$ by connecting the holes to the outer boundary of $P$ via diagonals, ensuring that the final polygon remains connected.
However, the running time becomes $O(n\log n+k)$ instead of $O(n+k)$, since triangulating a polygon with holes takes $O(n\log n)$ time~\cite{de1997computational}.
\end{remark}

\begin{figure}
\centering
\includegraphics[page=15]{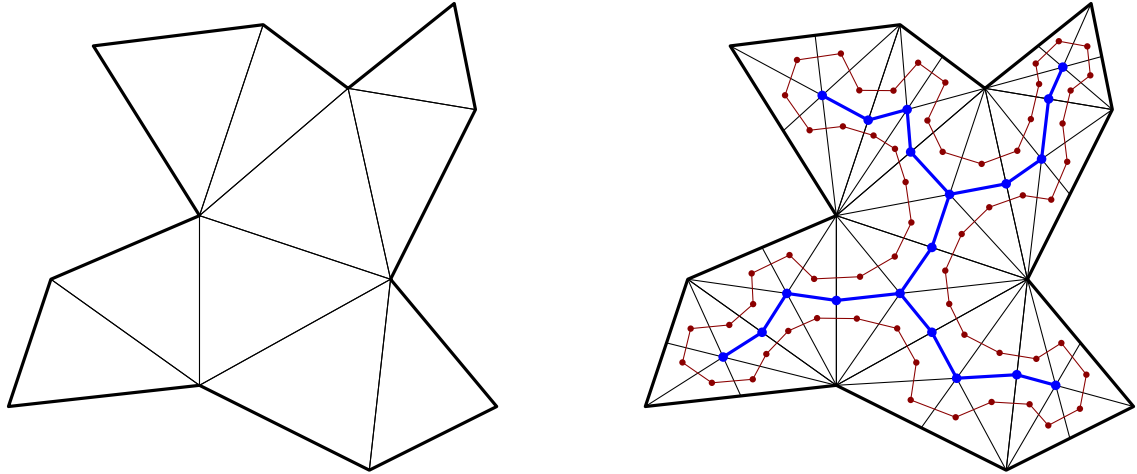}
\caption{Left:
The triangulation $\T$ of the input polygon $P$.
Right: The Steiner triangulation $\T_S$ with the wall shown in blue and the hamiltonian cycle $C$ in red.}
\label{fig:triangulations}
\end{figure}

We first compute an ordinary triangulation $\T$ of $P$, i.e., without using Steiner points.
We then split each triangle $\triangle\in \T$ into six pieces by adding the medians of $\triangle$. The resulting Steiner triangulation is denoted $\T_S$; see~\Cref{fig:triangulations}.
We can compute the original triangulation $\T$ in $O(n)$ time using Chazelle's algorithm~\cite{chazelle1991triangulating}, and then it is easy to compute $\T_S$ in $O(n)$ time as well (we note that $\T_S$ has $6n-12$ triangles and $3n-5$ Steiner points).

Each triangle in $\T_S$ either has exactly one corner or one edge contained in $\partial P$.
The cyclic order in which the triangles appear around $\partial P$ thus induces a Hamiltonian cycle $C$ in $\T_S$.

We define the \emph{wall} of $\T_S$ to be the graph induced by the vertices of $\T_S$, that are not on $\partial P$.
These vertices are the centroids of the triangles in the original triangulation $\T$ and the midpoints of the diagonals of $\T$, so it follows that the wall is a tree.
It is easy to see that each triangle in $\T_S$ either has (i) a single edge on $\partial P$ and a single vertex on the wall, or (ii) a single edge on the wall and a single vertex on $\partial P$.

\begin{figure}
\centering
\includegraphics[page=16]{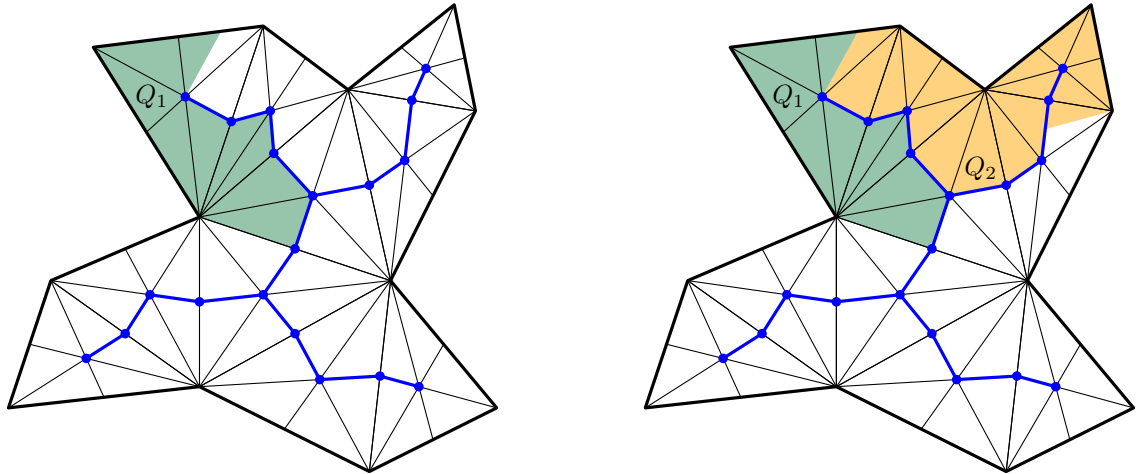}
\caption{The construction of the two first pieces $Q_1$ and $Q_2$.}
\label{fig:areapieces}
\end{figure}

We now construct our pieces in the following way; see \Cref{fig:areapieces}.
Let the triangles of $\T_S$ be $\triangle_1,\ldots,\triangle_m$ in the order of $C$.
We find the first triangle $\triangle_i$ such that $\sum_{j=1}^i \area \triangle_j\geq a_1$.
If $\triangle_i$ has an edge $e$ on $\partial P$, we cut it along a segment from the vertex on the wall to the point on $e$ such that the part incident with $\triangle_{i-1}$ together with the triangles $\triangle_1,\ldots,\triangle_{i-1}$ has area $a_1$, and these triangles (the last of which is just a subset of $\triangle_i$) constitute the first piece $Q_1$. Since $Q_1$ is the union of triangles visited on a path in $C$, then $Q_1$ is a simple polygon.  
The case that $\triangle_i$ has an edge $e$ on the wall and a vertex on $\partial P$ is handled similarly.

In general, when constructing a piece $Q_i$ with area $a_i$, we start at a segment connecting $\partial P$ to the wall, that was introduced when constructing the previous piece $Q_{i-1}$.
We then proceed to traverse the triangles in the order of $C$ until the traversed triangles have total area at least $a_i$ and then make an appropriate cut.

Since we traverse through $O(n)$ triangles and we make $O(k)$ cuts, this approach results in an algorithm with optimal running time $O(n+k)$, proving \Cref{thm:main-1}.

\section{Open questions}

We close the paper by suggesting some directions for future research.
We believe that our techniques can also be adapted to find $O(1)$-approximations of other natural variants of the problem of partitioning a simple polygon into small pieces.
These include partitions into pieces each of which is contained in an equilateral triangle of fixed size (or any other fixed polygon), that can either be allowed to be arbitrarily rotated or not.
Another version is pieces with bounded geodesic radius, which is closely related to the geodesic diameter partitioning problem, but it seems more involved to find a good boundary partition.

New ideas seem to be needed in order to partition polygons with holes (except for the area partitioning problem, where our algorithm does handle polygons with holes).
Here we would also allow the pieces to have holes.
It is interesting whether such algorithms can be found.
Another interesting direction, that, however, seems even more difficult, would be to find algorithms for partitioning three-dimensional (non-convex) polyhedra into small pieces.
This seems especially relevant to 3D printing and other manufacturing processes.
Heuristics have been proposed to solve this problem~\cite{DBLP:journals/tog/LuoBRM12,chen2022skeleton,jiang2017models}, and several videos can be found on YouTube showing how it can be done manually using various software (for instance, search for \href{https://www.youtube.com/results?search_query=splitting+large+part+for+3D+printing}{``splitting large part for 3D printing''}).

It is interesting whether the techniques of Abrahamsen and Stade~\cite{DBLP:journals/corr/abs-2404-09835} can be used to prove NP-hardness of other versions than the aligned square partitions.

Another research direction would be to consider covers instead of partitions.
Some covering problems have been shown to be $\exists\mathbb R$-complete~\cite{DBLP:journals/jacm/AbrahamsenAM22,DBLP:conf/focs/Abrahamsen21}, but these proofs rely on the use of some very long pieces that can cover distant regions of the polygon, so we wonder if some size-constraint covering problems are also $\exists\mathbb R$-complete.

Yet another direction would be to consider a dual version of the partitioning problems:
We specify the number of pieces as part of the input and seek a partition of the polygon $P$ into that number of pieces, where the objective is to minimize the size of the largest piece.
The size would be measured as the smallest enclosing disk or square, the straight or geodesic diameter or the perimeter.
The paper~\cite{DBLP:conf/esa/ArkinD0GMPT20} contains results on problems of this sort, where instead of fixing the number of pieces, we fix the number of cuts we are allowed to make along chords in $P$.



\printbibliography

\end{document}